 \newcommand{\ts}{\mathsf{timestamp}}
 \newcommand{\pr}{\mathsf{price}}
 \newcommand{\Q}{\mathsf{Qty}}
 \newcommand{\q}{\mathsf{qty}}
 \newcommand{\id}{\mathsf{id}}
 \newcommand{\Bids}{\mathsf{Bids}}
 \newcommand{\Asks}{\mathsf{Asks}}
 \newcommand{\idb}{\mathsf{id_{bid}}}
 \newcommand{\ida}{\mathsf{id_{ask}}}
 \newcommand{\idsb}{\mathsf{ids_{bid}}}
 \newcommand{\idsa}{\mathsf{ids_{ask}}}
 \newcommand{\ids}{\mathsf{ids}}
 \newcommand{\vol}{\mathsf{Vol}}
 \newcommand{\cmd}{\mathsf{command}}
 \newcommand{\ins}{\mathsf{instruction}}
 \newcommand{\buy}{\mathsf{Buy}}
 \newcommand{\sell}{\mathsf{Sell}}
 \newcommand{\del}{\mathsf{Del}}
 \newcommand{\absorb}{\mathsf{Absorb}}
 \newcommand{\iter}{\mathsf{Iterated}}
 \newcommand{\ip}{\mathsf{Iterated\text{-}P}}
\newcommand{\matchask}{\mathsf{Match\_Ask}}
\newcommand{\matchbid}{\mathsf{Match\_Bid}}
\newcommand{\deleteorder}{\mathsf{Del\_Order}}
\newtheorem{theorem}{Theorem}
\newtheorem{proposition}{Proposition}
\newtheorem{lemma}{Lemma}
\definecolor{blueblack}{rgb}{0,0,.7}
\newcounter{sideremark}
\definecolor{Darkblue}{rgb}{0,0,0.4}
\definecolor{Brown}{cmyk}{0,0.61,1.,0.60}
\definecolor{Purple}{cmyk}{0.45,0.86,0,0}
\definecolor{brickred}{rgb}{0.8, 0.25, 0.33}
\title{The Design and Regulation of Exchanges: A~Formal Approach} 
\author[1]{Mohit Garg}
\author[2]{Suneel Sarswat}
\affil[1]{University of Bremen, University of Hamburg, Germany, garg@uni-bremen.de}
\affil[2]{Tata Institute of Fundamental Research, India, suneel.sarswat@gmail.com}
\begin{document}

\maketitle

\begin{abstract}
We use formal methods to specify, design, and monitor continuous double auctions, which are widely used to match buyers and sellers at  exchanges of foreign currencies, stocks, and commodities. We identify three natural properties of such auctions and formally prove that these properties completely determine the input-output relationship. We then formally verify that a natural algorithm satisfies these properties. All definitions, theorems, and proofs are formalized in an interactive theorem prover. We extract a verified program of our algorithm to build an automated checker that is guaranteed to detect errors in the trade logs of exchanges if they generate transactions that violate any of the natural properties.
\end{abstract}

\maketitle

\section{Introduction}
Continuous double auctions are widely used to match buyers and sellers at market institutions such as foreign exchange markets, cryptocurrency exchanges, stock exchanges, and commodities exchanges. 
Increasingly, the exchanges deploy automated computer systems for conducting trades, which often receive a huge number of trade requests, especially in presence of high-frequency algorithmic traders. A minor bug in the computer system of a major exchange can have a catastrophic effect on the overall economy. To ensure that the exchanges work in a fair and orderly manner, they are subject to various regulatory guidelines. There have been a number of instances where the exchanges have been found violating regulatory guidelines \cite{nyse1,nyse2,nse,ubs}. For example, it was observed that NYSE Arca failed to execute certain trades in violation of the stated rules \cite{nyse1}, which led to the U.S. Securities and Exchange Commission imposing a heavy penalty on the New York Stock Exchange. 

The above example is an instance of a program not meeting its specification, where the exchange's order matching algorithm is the program and the exchange rules and regulatory guidelines form the broad specifications for the program.
The reasons for such violations are generally twofold. Firstly, the exchange rules and the regulatory guidelines are not presented in a formal language; thus, they tend to be rather vague and ambiguous. Even if they are stated unambiguously, it is not clear whether they are consistent; they lack a formal proof of consistency. Secondly, the fidelity of the program implementing the specifications is traditionally based on repeatedly testing the software on large data sets and removing all bugs that get detected. Although this approach irons out many bugs, it is not foolproof; in particular, many bugs that surface only in rare scenarios may easily remain hidden in the program.
A related concern that is often missed is that a faulty program might continue to make mistakes without anyone noticing or even realizing that they are mistakes. This is especially relevant in the context of exchanges, where individual traders have a limited view of the system and may not realize that the exchange is making mistakes, and the regulators might overlook them, probably because they are not the kinds of mistakes that they look for or are brought to their notice.

In this work, we introduce a new framework for exchange design and regulation that addresses the above concerns. We elucidate our approach for an exchange that implements the widely used continuous double auction mechanism to match buy and sell requests. To this end, we develop the necessary theoretical results for continuous double auctions which then allows us to apply the formal methods technology, which has classically been applied to safety-critical systems like nuclear power plants~\cite{nuclearformal}, railway signaling systems, flight control software, and in hardware design (see \cite{formalpractice} for a survey).

More specifically, we consider a general model for continuous double auctions and identify three natural properties, namely \texttt{price-time priority}, \texttt{positive bid-ask spread}, and \texttt{conservation}, that are necessary for any online algorithm implementing continuous double auction to possess. We then show that any algorithm satisfying these three properties for each input must have a `unique' output at every time step, implying that these three properties are sufficient for specifying continuous double auctions, and there is no need to impose any further requirements on an algorithm implementing such an auction. We then develop an algorithm and formally prove that it satisfies these three properties. This establishes the consistency of the three properties. 

All our proofs are machine-checked; all the notions, definitions, theorems, and proofs are formalized in the popular Coq proof assistant~\cite{coq}. Finally, using Coq's program extraction feature, we obtain an OCaml program of our verified algorithm for implementing continuous double auctions. Such a verified program is guaranteed to be bug-free and can be directly used at an exchange.  Additionally, enabled by the uniqueness theorem we prove, a verified program can be used to check for errors automatically in an existing exchange program by comparing the outputs of the verified program against that of the exchange program. Such a verified checker goes over the trade logs of the exchange and is guaranteed to detect a violation if the exchange output violates any of the three properties, and this alleviates the final concern we mentioned above. This can be a crucial tool for regulators. As an application, using our verified program we check for errors in real data collected from an exchange.

Thus, our approach can be summed up as follows: The specification for an exchange should consist of a set of provably consistent properties, which are preferably few and simple. Furthermore, these properties should be rich enough so that one can build an automated checker that can always detect a violation of the specification by going over the logs of the exchange if one exists. 

We begin by briefly describing the trading process using continuous double auctions. 
\subsection{Overview of continuous double auctions}
A double auction is a process to match multiple buy and sell orders for a given product at a marketplace. Potential buyers and sellers of a product place their orders at the market institution for trade. 
Buy and sell orders are referred to as bids and asks, respectively. The market institution on receiving orders is supposed to produce transactions.  Each transaction is between a bid and an ask and consists of a transaction price and a transaction quantity.

More specifically, each order consists of an order id, timestamp, limit price, and maximum quantity. 
Order ids are used to distinguish orders (and hence are unique for each order). Timestamps represent the arrival times of the orders. For a fixed product, the timestamps of the orders are distinct.\footnote{In real systems, even if multiple orders arrive at the same time, they are entered into the system one by one. How this is exactly achieved is beyond the scope of this work.}
The limit price of an order $\omega$ is the maximum (if $\omega$ is a bid) or minimum (if $\omega$ is an ask) possible transaction price for each transaction involving $\omega$. The maximum quantity $q$ of an order $\omega$ is the number of units of the product offered for trade, i.e., the sum of the transaction quantities, where $\omega$ participates in, is at most $q$.

 Double auctions are used to generate transactions systematically.
There are two common types of double auctions that are used at various exchanges: call auctions~\cite{WWW98,NiuP13,zhao2010maximal} and continuous double auctions. We discuss call auctions in Section~\ref{related}.
 In a system implementing continuous double auctions, buyers and sellers may place their orders at any point in time. On receiving an order $\omega$, the system instantly tries to match $\omega$ with fully or partially unmatched orders that arrived earlier (`resident orders') and output transactions, before it proceeds to process any other incoming orders. If there are multiple orders with which $\omega$ can be matched, then the system prioritizes those matchable orders based on \texttt{price-time priority} (first the orders are ranked by the competitiveness of their price;  orders with the same price are then ranked by their arrival time, i.e., as per their timestamps). If the total transaction quantities of the transactions generated by $\omega$ exhaust the maximum quantity of $\omega$, then it leaves the system completely. Else $\omega$ with its remaining unmatched quantity is retained by the system for potential future matches; we refer to such unmatched orders as resident orders. A resident order leaves the system when its quantity gets fully exhausted.
 
The general model we work with has three primitive instruction types: buy, sell, and delete.\footnote{In Section~\ref{sec:application} we describe how certain other instruction types can be converted to these primitives.} That is, apart from buy and sell instructions, it is possible to delete a resident order by giving a `delete id' instruction to the system. %
The system maintains two main logbooks: an order book and a trade book. The order book is a list of all buy, sell, and delete instructions that the system received, ordered by their timestamps. The trade book consists of the list of all transactions generated by the system in the order in which they were generated. 
 
 We now describe three natural properties of a system implementing 
 continuous double auctions that are relevant to our formalization.

 \begin{itemize}
  \item{\texttt{Positive bid-ask spread.}} This property represents the `inertness' of resident orders; at any point in time, the resident orders are not matchable to each other, i.e., the limit price of each resident bid is strictly less than the limit price of each resident ask. This follows immediately from the fact that an order becomes resident only when it cannot be matched any further with the existing resident orders (or when there are no matchable resident orders at all).
 
 \item{\texttt{Price-time priority.}} When a new order arrives, the system tries to match it with the resident orders and generate transactions. If a less-competitive resident order participates in one of these transactions, then all current resident orders which are more competitive must get fully traded in these transactions, where competitiveness is decided based on 
 \texttt{price-time priority}. This is a direct consequence of the matching process described above.

 \item{\texttt{Conservation.}} This property expresses the `honesty' of the system. Roughly speaking, it states that the system should not lose, create, or modify orders arbitrarily. For example, when a new incoming order gets partially traded with existing resident orders, it becomes resident with its maximum quantity precisely being the difference between its original maximum quantity and the total quantity of the transactions that are generated. Furthermore, the timestamp, id, and limit price of the resident order must remain unchanged. 
  \end{itemize}

One might feel that many properties would be necessary for specifying `conservation' formally. On the contrary, later we will see that the above three properties can be stated rather succinctly once we set up our definitions appropriately. Also, we will show that any algorithm that implements these three properties, on any input (order book), will produce a `unique' output at every point in time, implying that these properties completely characterize continuous double auctions.
 One can learn more about double auctions from~\cite{harris2003trading,ptpandprorata,xetra,D93}.
 
 \subsection{Results}
 After setting up the definitions appropriately, we formally represent the three natural properties of continuous double auctions that we discussed above, namely \texttt{price-time priority}, \texttt{positive bid-ask spread}, and \texttt{conservation}. We are then able to prove the following results.

 \begin{itemize}
     \item{Maximum matching.} We first show in Lemma~\ref{lem:max} that any algorithm that satisfies \texttt{positive bid-ask spread} and \texttt{conservation}, for each incoming order will generate a set of transactions (`matching') whose total quantity will be maximum, i.e., it will be at least the total quantity of any other feasible set of transactions.
     
     \item{Local Uniqueness.} In Theorem~\ref{thm:processUniqueness} we show that any two processes that satisfy the aforementioned three properties, for any set of resident orders and an incoming instruction that satisfy certain technical conditions, will essentially generate the same matching, and the resident orders that remain at the end of processing this instruction will also be the same. To prove this theorem we make use of Lemma~\ref{lem:max}.
     
     \item{Global Uniqueness.} We then show in Theorem~\ref{thm:globalUniquness} that any two processes that satisfy the three properties, given any order book 
     as input, at any point in time will essentially generate the same matching, and the resident orders that remain will also be the same. We prove this theorem by lifting Theorem~\ref{thm:processUniqueness} via an induction argument.
     
     \item{Correctness of $\mathsf{Process\_instruction}$}. We then design an algorithm for continuous double auctions which we refer to as $\mathsf{Process\_instruction}$ and then in Theorem~\ref{thm:Process_instructionSatisfiesProperties} show that it satisfies the three properties.

 \end{itemize}

 To use the above results in a practical setting,
 we extract an OCaml program of our verified algorithm $\mathsf{Process\_instruction}$, using Coq's code extraction feature, and then use it to check for errors  in real data from an exchange by running it on order books and then comparing the outputs with the corresponding trade books. Our global uniqueness theorem implies that if the exchange program has the three natural properties, then the output of our program should `match' with the output of the exchange. If they do not match, then at least one of the three properties must be violated by the exchange program.
 
     Our Coq formalization uses about 6000 lines of new code, which includes about 250 lemmas and 80 definitions and functions, and is available in the accompanying supplementary materials~\cite{suppl}.
 
 \subsection{Related work} 
 \label{related} 
 Mavroudis and Melton~\cite{libra} study fairness, transparency, and manipulation in exchange systems. They argue that idealized market models do not capture infrastructural inefficiencies and their simple mechanisms are not robust enough to withstand sophisticated technical manipulation attacks; as a result, the theoretical fairness guarantees provided by such exchanges are not retained in reality. To address this problem, they propose the LIBRA system that achieves a relaxed notion of fairness by making clever use of randomness in routing orders to the matching engine. Our work focuses on a complementary and an arguably more basic aspect of exchanges; once the system assigns unique timestamps, even if they are `inaccurate' or `unfair' due to infrastructural inaccuracies or other reasons, with respect to those timestamps the matching engine should respect the desired properties; our work focuses on specifying and checking the correctness of the matching engine.

 There have been some important works on formalizing financial systems, double auctions, and theorems in economics. We briefly mention a few of them. In an influential work \cite{PI17}, Passmore and Ignatovich highlight the need for formal verification of financial algorithms and suggest various open problems in the field. In response to these challenges, they designed Imandra~\cite{imandra}, a specialized formal verification system and programming language designed to reason about properties of algorithms that may be proved, refuted, or described.
 
 Cervesato, Khan, Reis, and \v{Z}uni\'c~\cite{clf} present a declarative and modular specification for an automated trading system for continuous double auctions in a concurrent linear framework (CLF) and implemented it in a CLF type checker that also supports executing of CLF specifications. Among other things, they were able to establish that their system is never in a locked or crossed state, which is equivalent to the \texttt{positive bid-ask spread} property mentioned above.
 
 Wurman, Walsh, and Wellman \cite{WWW98}  deal with the theory and implementation of flexible double auctions for electronic exchanges. In particular, the authors analyze the incentive compatibility of such auctions.  Kaliszyk and Parsert~\cite{KP18} introduce a formal micro-economic framework and formally prove the first welfare theorem leading to a more refined understanding of the theorem than was available in the economics literature. 

{\bf Comparison with call auctions.} Our work is closely related to the work of Raja, Singh, and Sarswat~\cite{RSS21}, which formalizes call auctions. As mentioned earlier, call auctions form a class of double auctions, which are `non-continuous' or `one-shot' in nature. In call auctions, orders are collected from buyers and sellers for a fixed duration of time, at the end of which the orders are matched simultaneously to generate transactions. 
In \cite{RSS21} call auctions are formalized and certain uniqueness theorems are obtained. Unlike call auctions, prior to our work, for continuous double auctions the specifications were not concisely stated in the literature, hindering formal development as also noted in \cite{RSS21}.

It is interesting to compare call auctions with one time instant of continuous double auctions when a new order arrives in presence of resident orders. 
In call auctions two different objectives are usually considered: 1. Maximum matching: produce transactions that maximize the trade volume (the sum of transaction quantities); 2. Maximum uniform-price matching: produce transactions that maximize the trade volume subject to the constraint that all the transaction prices are the same. It turns out, there are instances where meeting these objectives leads to different matchings and trade volumes. In either case, after the transactions are produced, the orders or parts of the order that remain unmatched are not-matchable, i.e., they have \texttt{positive bid-ask spread}. 
 Compare this with one time instant of continuous double auction. As we show in this work, \texttt{positive bid-ask spread} already makes the process unique and can be seen as a special case of a call auction where the output matching meets both the objectives (1 and 2) simultaneously.\footnote{For this, we need a slightly relaxed definition of uniformity: we say a matching has a uniform price if there exists a common price that can be set for all transactions in the matching such that no limit price of the participating orders is breached.} In the context of continuous double auctions, this leads to an arguably much simpler requirement of \texttt{positive bid-ask spread} as opposed to the requirements of maximum matching or maximum uniform-price matching. 
 Additionally, we derive a stronger uniqueness theorem for continuous double auctions: for each bid-ask pair, the total transaction quantity between them is unique; in contrast, for call auctions, for each order, the total transaction quantity involving that order is unique.
 Where continuous double auctions completely differ from call auctions is in their online setting, which adds a new layer of complexity to the formalization.
 
 \subsection{Organization of the rest of the paper} For understanding the rest of the paper, we do not assume that the reader has any prior knowledge or expertise in formalization or using interactive theorem provers like Coq. For an interested reader, we have provided in~\cite{suppl} details  that show how the notions presented here are represented in our Coq formalization, and one can convince oneself of the correctness of the results by simply compiling the accompanying Coq formalization~\cite{suppl} without having to read the proofs presented here.
 Additionally, a demonstration is included in the supplementary materials; one needs an OCaml compiler to be able to run the demonstration.
 
 In Section~\ref{sec:preliminaries} we present the various notions and definitions that are needed for our work.
Section~\ref{sec:properties} has the three natural properties stated formally. In Sections~\ref{sec:maximum}-\ref{sec:verified}, we prove the above mentioned results: Maximum matching, Local Uniqueness, Global Uniqueness, and Correctness of $\mathsf{Process\_instruction}$ (partly moved to Appendix~\ref{sec:algoCorrectness}. 
In Section~\ref{sec:application} we describe how to build automated checkers. 
We also include a demonstration to show how our checker works on real data. Finally, the last section concludes the paper and describes future directions.

\section{Preliminaries}\label{sec:preliminaries}
We begin by introducing the various definitions that are needed for establishing our results. For ease of readability, we extensively use sets\footnote{Sets introduced in this work are all finite.} in our presentation. In the Coq formalization, however, we use lists instead. The choice of lists in the formalization allows us to use existing libraries that were developed in \cite{SS20} and \cite{RSS21} for the purposes of modeling auctions, and, more importantly, it helps us in optimizing our algorithm leading to a reasonably fast OCaml program.\footnote{Using other data structures might yield further improvements.} In~\cite{suppl}, we include detailed definitions and the main results alongside their formal versions, showing a faithful correspondence between the results and the formalization.

\subsection{Orders}

To avoid proving common properties of bids and asks twice, we model both as orders. 
An order is a $4$-tuple $(id$, $timestamp$, $quantity$, $price)$. For an order $\omega$, its components are $\id(\omega)$, $\ts(\omega)$, $\q(\omega)$, and $\pr(\omega)$, each of which is a natural number, and $\q(\omega)>0$.

For a set of orders $\Omega$, $\ids(\Omega)$ represents the set of ids of the orders in $\Omega$. For a set of orders $\Omega$ with distinct ids and an order $\omega\in\Omega$ such that $\id(\omega)=id$, with slight abuse of notation, we define $\ts(\Omega,id)=\ts(\omega)$, $\q(\Omega,id)=\q(\omega)$, and $\pr(\Omega, id)=\pr(\omega)$.

We will often have a universe from which the bids and asks arise, which we call an order-domain. $(B,A)$ is an {\bf order-domain}, if $B$ and $A$ are sets of orders. Here, $B$ represents a set of bids and $A$ represents a set of asks. An order-domain  where each id is distinct and each timestamp is distinct is called {\bf admissible}.

We now define the terms `tradable' and `matchable'. Given two orders $b$ (bid) and $a$ (ask), we say $b$ and $a$ are {\bf tradable} if $\pr(b)\geq \pr(a)$.
An order-domain is {\bf matchable} if it contains a bid and an ask that are tradable.

Next, we define competitiveness.
A bid $b_1$ is more {\bf competitive} compared to another bid $b_2$, denoted by $b_1 \succ b_2$, if
$\pr(b_1)$ $>$ $\pr(b_2)$ OR
     $(\pr(b_1)$ $=$ $\pr(b_2)$ AND $\ts(b_1)$ $<$ $\ts(b_2))$. 
  Similarly, an ask $a_1$ is considered more {\bf competitive} compared to another ask $a_2$, denoted by $a_1\succ a_2$, if $\pr(a_1)$ $<$ $\pr(a_2)$ OR
     $(\pr(a_1)$ $=$ $\pr(a_2)$ AND $\ts(a_1)$ $<$ $\ts(a_2))$.

We treat a set of orders also as a multiset where we suppress the quantity field of each order and set its multiplicity equal to its quantity. This view will help us succinctly state the \texttt{conservation} property and ease the formalization substantially. For a set of orders $S_1$ and $S_2$, $S_1 \setminus S_2$, represents the usual set difference, whereas $S_1 - S_2$ represents the usual multiset difference between the two sets.

\subsection{Transactions and matchings}

For our purposes, keeping the price and timestamp in a transaction is redundant, as they can be derived from the participating bid and ask.
Based on the application, a transaction between tradable orders $b$ and $a$ can be assigned an appropriate transaction price in the interval $[\pr(a),\pr(b)]$. For example, for the sake of concreteness, one may assume that such a transaction has price $\pr(a)$ and timestamp $\max\{\ts(b),\ts(a)\}$.

A transaction is a $3$-tuple $(id_b,id_a,quantity)$ of natural numbers where $id_b$ and $id_a$ represent the ids of the participating bid and ask, respectively, and $quantity >0$. For a transaction $t$ its components are represented by $\idb(t)$, $\ida(t)$, and $\q(t)$.

Let $T$ be a set of transactions. We define $\idsb(T)$ and $\idsa(T)$ to be the sets of ids of the participating bids and asks, respectively. Next, we define $\Q_{\text{bid}}(T, id_b)$ to be sum of the transaction quantities of all transactions in $T$ whose bid id is $id_b$, and $\Q_{\text{ask}}(T,id_a)$ to be the sum of the transaction quantities of all transactions in $T$ whose ask id is $id_a$. For ease of readability, we often just use $\Q$ instead of $\Q_{\text{ask}}$ and $\Q_{\text{bid}}$. We define $\vol(T)$ to be the sum of the transaction quantities of all transactions in $T$.

We say a transaction $t$ is  {\bf over} the order-domain $(B,A)$ iff $\idb(t)$ $=$ $\id(b)$ for some $b\in B$ and $\ida(t)=\id(a)$ for some $a\in A$. We say that a transaction $t$ is {\bf valid} w.r.t order-domain $(B,A)$ iff there exists $b\in B$ and $a\in A$ such that 
            (i) $\idb(t)=\id(b)$ and  $\ida(t)=\id(a)$,
            (ii) $b$ and $a$ are tradable, and
           (iii) $\q(t) \leq \min(\q(b),\q(a))$.
We say that a set of transactions $T$ is {\bf valid} over an order-domain $(B,A)$ if each transaction  in $T$ is valid over $(B,A)$.

Given a set of transactions, we would like to extract out the set of `traded' bids/asks. To this end, we define  the functions $\Bids$ and $\Asks$  as follows. Let $T$ be a set of transactions  over an admissible order-domain $(B,A)$. We define $\Bids(T, B)$ to be the set of all the bids in $B$ that participate in $T$ where the quantity of each bid $b$ is set to the sum of the transaction quantities of the transactions in $T$ involving $b$. Similarly, $\Asks(T,A)$ is the set of asks in $A$ that participate in $T$ where the quantity of each ask is set to its total traded quantity in $T$. We often simply write $\Bids(T)$ and $\Asks(T)$  instead of  $\Bids(T,B)$ and $\Asks(T,A)$ whenever $B$ and $A$ are clear from the context.

Finally, we define a matching, which is a set of transactions that can simultaneously arise from an order-domain. We also define the canonical form of a set of transactions, which will be often applied to matchings.
We say a set of valid transactions $M$ over an admissible order-domain $(B,A)$ is a {\bf matching} over $(B,A)$ if for each order $\omega\in B\cup A$, 
$\Q(M,\id(\omega))\leq \q(\omega)$.
We define the {\bf canonical form} of a set of transactions $M$, denoted by $\mathcal C(M)$ to be a set of transactions satisfying the following two properties. For each bid-ask pair $(b,a)$, $\mathcal C(M)$ consists of at most one transaction involving $b$ and $a$. Furthermore, for each bid-ask pair $(b,a)$, the total transaction quantity between them in $M$ is equal to the total transaction quantity between them in $\mathcal C(M)$. It is easy to see that $\mathcal C(M)$ always exists.

\subsection{Order book and process}

We now formally define instructions and order books. $\buy$, $\sell$, and $\del$ are called $\cmd$s.
An $\ins$ is a pair $(\Delta,\omega)$ where $\Delta$ is a $\cmd$ and $\omega$ is an order.
For convenience we represent an $\ins$  $(\Delta,\omega)$ by $\Delta \ \omega$. Also sometimes we represent $(\del, \omega)$ $\ins$ simply by $\del \ \id(\omega)$; this is done because for a $\del$ $\cmd$, only the $\id$ of the order matters. An {\bf order book} is a list where each entry is an $\ins$.

We now define a `structured' order book, which satisfies two conditions. Firstly, the timestamps of the orders must be increasing. Secondly, the ids of the orders in the non-$\del$ $\ins$s in the order book must be all distinct. We will relax the second condition slightly which will help us in certain applications. We allow an order to have an id identical only to the id appearing in an immediately preceding $\del$ order instruction. This will later help us in our application to implement an `update' instruction, by replacing it with a delete instruction followed by a $\buy$ or $\sell$ instruction carrying the same id.
Formally, an order book $\mathcal{I}=[(\Delta_0,\omega_0), \cdots,(\Delta_n,\omega_n)]$ is called {\bf structured} if the following conditions hold.
\begin{itemize}
    \item For all $i\in \{0,\cdots,n-1\},$ $\ts(\omega_i) < \ts(\omega_{i+1})$.
    \item For all $i\in \{0,1,\cdots ,n\}$, at least one of the following three conditions hold.
   (i) $\Delta_i = \del$.
    (ii) $\id(\omega_i) \notin \ids\{\omega_0,\cdots, \omega_{i-1}\}$.
    (iii) $\id(\omega_i) = \id(\omega_{i-1})$ and $\Delta_{i-1}=\del$.
\end{itemize}

We now define a process, which represents an abstract online algorithm that will be fed resident bids and asks and an instruction and it will output a set of transactions and resulting resident bids and asks. A {\bf process} is a function $(B,A,\tau) \mapsto (B',A',M)$ that takes as input sets of orders $B$, $A$, and an instruction $\tau$ and outputs sets of orders $B'$, $A'$, and a set of transactions $M$.

The input to a process is an order-domain, which represents the resident orders in the system, and an instruction. If this instruction is a delete id instruction, then the process is supposed to delete all resident orders with that id, and the  `effective' order-domain  potentially gets reduced. Otherwise, if the instruction is a buy/sell order, then the `effective' order-domain needs to include that order. We define $\absorb$ that takes an order-domain and an instruction as input and outputs the `effective' order-domain. For an order-domain $(B,A)$ and an instruction $\tau$ we define

$$\absorb(B,A,\tau):=
\begin{cases}
(\{\beta \in B \mid \id(\beta)\neq id\},\{\alpha \in A  \mid \id(\alpha)\neq id\}) &\text{ if } \tau=\text{Del } id\\ (B\cup\{\beta\},A) &\text{ if } \tau=\text{Buy } \beta\\
(B,A\cup\{\alpha\})&\text{ if } \tau=\text{Sell } \alpha.
\end{cases}$$

Observe that the following propositions follow immediately from the definition of $\absorb$.

\begin{proposition}\label{prop:absorb1}
If $\tau$ is an instruction and $(B,A)$ is an order-domain such that the timestamps of the orders in $B\cup A$ are all distinct and different from the timestamp  appearing in $\tau$ and if $(B',A')=\absorb(B,A,\tau)$, then the timestamps of the orders in $B'\cup A'$ are all distinct.
\end{proposition}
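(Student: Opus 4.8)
The plan is to argue directly from the definition of $\absorb$ by considering each of the three cases for the command in the instruction $\tau$, and in each case verifying that the distinctness of timestamps in $B'\cup A'$ follows from the hypotheses. The key observation is that $\absorb$ only ever deletes orders from $B\cup A$ or adds a single new order whose timestamp is the one appearing in $\tau$, so distinctness is preserved by a short argument in each case.

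First I would handle the $\del$ case: here $(B',A')$ is obtained from $(B,A)$ by removing all orders with a particular id, so $B'\cup A'\subseteq B\cup A$. Since any subset of a set with pairwise-distinct timestamps still has pairwise-distinct timestamps, the conclusion is immediate from the hypothesis on $B\cup A$. Next I would treat the $\buy\ \beta$ case: here $B'\cup A' = (B\cup A)\cup\{\beta\}$, where $\beta$ is the order carried by $\tau$, so $\ts(\beta)$ is precisely the timestamp appearing in $\tau$. The hypothesis gives two facts I need: the timestamps within $B\cup A$ are already distinct, and the timestamp in $\tau$—namely $\ts(\beta)$—differs from every timestamp of an order in $B\cup A$. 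Together these yield that the timestamps in $B'\cup A'$ are all distinct. The $\sell\ \alpha$ case is symmetric, with $\alpha$ adjoined to $A$ instead.

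There is no substantive obstacle here; the statement is essentially a bookkeeping consequence of unfolding $\absorb$, which is exactly why the authors flag that it ``follows immediately from the definition.'' The only point requiring the slightest care is ensuring that the two separate conditions in the hypothesis—internal distinctness of $B\cup A$ and separation of the $\tau$-timestamp from all of them—are both invoked in the add cases, since dropping either one would allow a timestamp collision in $B'\cup A'$. I would therefore state the case split explicitly and, in the $\buy$ and $\sell$ cases, cite both parts of the hypothesis when concluding distinctness of the union.
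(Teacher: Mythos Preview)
Your proposal is correct and matches the paper's approach: the paper simply remarks that the proposition follows immediately from the definition of $\absorb$, and your case split on $\del$/$\buy$/$\sell$ is exactly the unfolding that justifies that remark. There is nothing to add.
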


\begin{proposition}\label{prop:absorb2}
If $(B,A)$ is an order-domain such that the ids of the orders in $B\cup A$ are all distinct, then for all $\del$ instructions $\tau$, if $(B',A')=\absorb(B,A,\tau)$, then the ids of the orders in $B'\cup A'$ are all distinct.
\end{proposition}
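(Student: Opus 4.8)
The plan is to reason directly from the definition of $\absorb$, since the claim is essentially immediate once the relevant case is unfolded. As $\tau$ is a $\del$ instruction, I would write $\tau = \del\ id$ for some id and invoke the first branch of the definition of $\absorb$, which gives $B' = \{\beta \in B \mid \id(\beta)\neq id\}$ and $A' = \{\alpha \in A \mid \id(\alpha)\neq id\}$.

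The crux is then a single observation: both $B'$ and $A'$ arise from $B$ and $A$ purely by filtering, so $B'\subseteq B$ and $A'\subseteq A$, and therefore $B'\cup A' \subseteq B\cup A$. Since distinctness of ids is inherited by subsets --- if no two orders of $B\cup A$ share an id, then neither do any two orders of the smaller collection $B'\cup A'$ --- the hypothesis on $B\cup A$ transfers directly to $B'\cup A'$, which is exactly the conclusion. Note that the filtering does not alter any surviving order, so the ids appearing in $B'\cup A'$ are literally a subcollection of those appearing in $B\cup A$.

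I do not anticipate any genuine obstacle here; the result is of the ``follows immediately'' variety. The only point worth flagging is that the argument really does use that $\tau$ is a $\del$ instruction: in the $\buy$ and $\sell$ branches, $\absorb$ \emph{enlarges} the order-domain by inserting a new order whose id might collide with that of a resident order, so the plain subset argument would not apply (indeed, that case is exactly why the structured-order-book conditions are imposed elsewhere). For $\del$ nothing is ever added, so monotonicity of the id-distinctness property under taking subsets closes the proof.
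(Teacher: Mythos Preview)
Your proposal is correct and matches the paper's approach: the paper simply remarks that Propositions~\ref{prop:absorb1}--\ref{prop:absorb3} ``follow immediately from the definition of $\absorb$,'' and your argument is exactly the natural unpacking of that remark for the $\del$ case (filtering yields a subset, and id-distinctness is inherited by subsets). There is nothing to add.
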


\begin{proposition}\label{prop:absorb3}
If $\tau$ is an instruction and $(B,A)$ is an order-domain such that the ids of the orders in $B\cup A$ are all distinct and different from the id appearing in $\tau$ and if $(B',A')=\absorb(B,A,\tau)$, then the ids of the orders in $B'\cup A'$ are all distinct.
\end{proposition}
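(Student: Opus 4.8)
The plan is to prove the statement by a straightforward case analysis on the three possible forms of the instruction $\tau$, following exactly the three branches in the definition of $\absorb$. In each case I will describe $B'\cup A'$ explicitly in terms of $B\cup A$ and then invoke the distinctness hypothesis. The only thing requiring care is the bookkeeping of what ``the id appearing in $\tau$'' denotes in each branch: it is $\id(\beta)$ for a $\buy$, $\id(\alpha)$ for a $\sell$, and $id$ for a $\del$.

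First I would treat $\tau=\del\ id$. Here $B'=\{\beta\in B\mid \id(\beta)\neq id\}$ and $A'=\{\alpha\in A\mid \id(\alpha)\neq id\}$, so $B'\subseteq B$ and $A'\subseteq A$, whence $B'\cup A'\subseteq B\cup A$. Since any sub-collection of a collection with distinct ids still has distinct ids, the conclusion is immediate. Note that this case does not even use the hypothesis that the ids in $B\cup A$ differ from the id in $\tau$; it coincides with Proposition~\ref{prop:absorb2}.

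Next I would treat $\tau=\buy\ \beta$. Then $(B',A')=(B\cup\{\beta\},A)$, so $B'\cup A'=(B\cup A)\cup\{\beta\}$. By hypothesis the ids of the orders in $B\cup A$ are all distinct and all different from the id appearing in $\tau$, which is precisely $\id(\beta)$. Hence adjoining $\beta$ introduces no repeated id, and the ids of $B'\cup A'$ are distinct. The case $\tau=\sell\ \alpha$ is identical, with the roles of the bid and ask sides exchanged, adding an order of id $\id(\alpha)$ that is distinct from all existing ids by the same hypothesis.

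Since these three cases are exhaustive, the proposition follows. There is no genuine obstacle here: the result drops out of the definition of $\absorb$ together with the fact that distinctness is preserved under taking subsets (the $\del$ branch) and under adjoining a single element whose id is assumed fresh (the $\buy$ and $\sell$ branches). The proof is the natural id-analogue of Proposition~\ref{prop:absorb1}, with ids in place of timestamps.
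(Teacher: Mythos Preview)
Your proposal is correct and matches the paper's approach: the paper simply notes that the proposition follows immediately from the definition of $\absorb$, and your case analysis on the three branches of that definition is precisely how one unpacks that remark. There is nothing to add.
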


To a process, we will usually feed inputs that satisfy certain properties, and such inputs we refer to as legal-inputs and are defined as follows. We say an order-domain $(B,A)$ and an instruction $\tau$ forms a {\bf legal-input} if $B$ and $A$ are not matchable and  $\tau$ is such that $(B',A')=\absorb(B,A,\tau)$ is an admissible order-domain.

\begin{figure}
  \begin{minipage}[c]{0.50\textwidth}
    \includegraphics[scale=.72,trim=0cm 0 1.6cm 0cm]{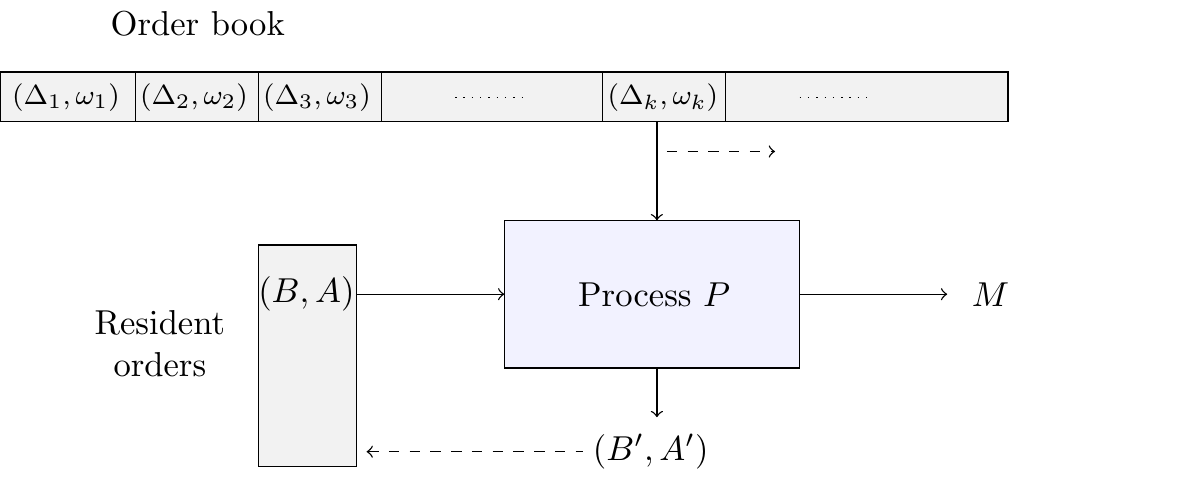}
  \end{minipage}\hfill
  \begin{minipage}[c]{0.43\textwidth}
\caption*{Illustration of a Process: at time $k$ the process $P$ takes as input resident orders $(B,A)$ and an instruction $(\Delta_k, \omega_k)$ and outputs a matching $M$ and a pair of sets of orders $(B',A')$ that will act as resident orders for the next time step. $\iter$ takes as input a process $P$, an order book $I$, and a time $k$ and outputs what $P$ would output on $I$ at time $k$. Thus, for the example in the figure, we have $\iter(P, I, k) = (B', A', M)$.}
  \end{minipage}
\end{figure}

Finally, we define $\iter$. Given a process $P$, an order book $\mathcal{I}$ and a natural number $k$, we define $\iter(P,\mathcal I, k)$ to be the output of $P$ at time $k$ when it is iteratively run on the order book $\mathcal I$. When $k>\mathsf{length}(\mathcal I)$, $\iter(P,\mathcal I, k)$ returns $(\emptyset, \emptyset, \emptyset)$. Otherwise, $\iter(P,\mathcal I, k)$ can be computed recursively as per the following algorithm.

\begin{algorithm}[H]
 \caption{Iteratively running a process on an order book}\label{process}
 \includegraphics[scale =.75]{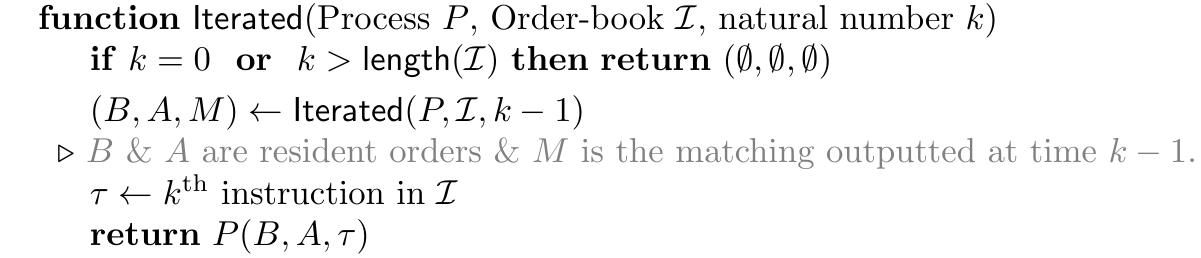}
\end{algorithm}

\section{Three natural properties of a process}\label{sec:properties} 

Having set up the definitions, we will now state the three natural properties formally.

We say a process $P$ satisfies \texttt{positive bid-ask spread}, \texttt{price-} \texttt{time priority}, and \texttt{conservation} if for all order-domains $(B,A)$ and an instruction $\tau$ such that $(B,A)$ and $\tau$ forms a legal-input,
 $P(B,A,\tau)=(\hat B,\hat A, M)$ and $(B',A')=\absorb(B,A,\tau)$ implies the following three conditions.
 
\begin{enumerate}
    \item {\bf \texttt{Positive Bid-Ask Spread}}: $\hat{B}$ and $\hat{A}$ are not matchable.
    
    \item {\bf \texttt{Price-Time Priority}}: If a less competitive order $\omega$ gets traded in $M$, then all orders that are more competitive than $\omega$ must be fully traded in $M$. Formally,
    \begin{align*}
    \text{a. } \ \forall &a,a'\in A', \ a \succ a' \text{ and } \id(a') \in \idsa(M) 
    \implies \Q(M,id(a)) = \q(a) \\
    \text{b. } \ \forall &b,b'\in B', \ b \succ b' \text{ and } \id(b') \in \idsb(M) 
    \implies \Q(M,id(b)) = \q(b).
    \end{align*}
    
    \item {\bf \texttt{Conservation}}: $P$ does not lose or add orders arbitrarily. For this, we have the following technical conditions.
\begin{align*} 
 &\text{a. } \ M \text{ is a matching over the order-domain } (B',A') \\
&\text{b. } \ \hat{B} = B' - \Bids(M,B') \quad
\text{c. } \  \hat{A} = A' - \Asks(M,A').
\end{align*}
\end{enumerate}

The above definition appears in our Coq formalization as follows.

\includegraphics{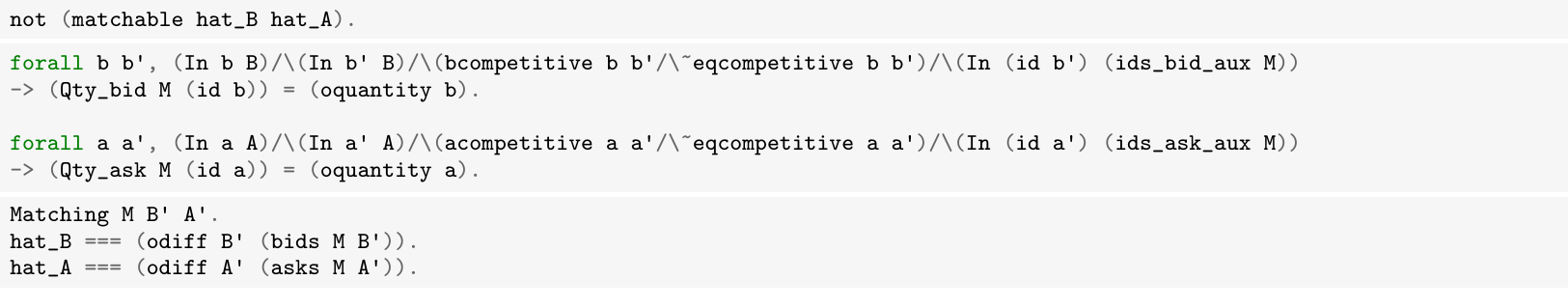}


\section{Maximum matching}\label{sec:maximum}
In this section, we prove the maximum matching lemma.
\begin{lemma}[Maximum matching]
\label{lem:max}
Let $P$ be a process that satisfies \texttt{positive} \texttt{bid-ask} \texttt{spread} and \texttt{conservation}.
For all order-domain and instruction pairs $((B,A),\tau)$ that form legal-inputs, if $P(B,A,\tau)=(\hat B, \hat A, M)$, then for all matchings $M'$ over $\absorb(B,A,\tau)$, $\vol(M)\geq \vol(M')$.
\end{lemma}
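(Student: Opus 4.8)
The plan is to exploit the fact that the legal-input hypothesis constrains the effective order-domain $(B',A')=\absorb(B,A,\tau)$ very strongly: since $B$ and $A$ are not matchable, the only tradable bid-ask pairs available in $(B',A')$ are those created by the single order that $\tau$ introduces. First I would dispose of the case $\tau=\del\ id$: here $(B',A')$ is a sub-order-domain of the non-matchable $(B,A)$, hence itself non-matchable, so the only matching over it is empty and $\vol(M)=0=\vol(M')$. For $\tau=\buy\ \beta$ (the case $\tau=\sell\ \alpha$ being symmetric), I would first prove the key structural claim: every transaction of every matching $M'$ over $(B',A')$ involves $\beta$. Indeed, a transaction between some $b\in B$ and $a\in A$ would witness that $(B,A)$ is matchable, contradicting the legal-input hypothesis. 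Consequently $\vol(M')=\Q(M',\id(\beta))\le\q(\beta)$ for every matching $M'$, and likewise $\vol(M)=\Q(M,\id(\beta))$.

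Next I would split on whether $\beta$ is fully traded in $M$. If $\Q(M,\id(\beta))=\q(\beta)$, then $\vol(M)=\q(\beta)\ge\Q(M',\id(\beta))=\vol(M')$ and we are done. Otherwise $\Q(M,\id(\beta))<\q(\beta)$, and here I would combine \texttt{conservation} with \texttt{positive bid-ask spread}. By \texttt{conservation}, $\hat B=B'-\Bids(M,B')$, so the strict inequality means $\beta$ survives with positive leftover quantity, i.e.\ $\beta\in\hat B$; similarly $\hat A=A'-\Asks(M,A')$, so an ask $a$ lies in $\hat A$ precisely when $\Q(M,\id(a))<\q(a)$.

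Now \texttt{positive bid-ask spread} says $\hat B$ and $\hat A$ are not matchable; since $\beta\in\hat B$, no ask in $\hat A$ is tradable with $\beta$. Contrapositively, every ask $a$ that is tradable with $\beta$ satisfies $\Q(M,\id(a))=\q(a)$, i.e.\ is fully traded in $M$; moreover such an $a$ must actually be traded in $M$, since otherwise it would sit in $\hat A$ and be tradable with $\beta\in\hat B$. Because every transaction of $M$ involves $\beta$, the asks traded in $M$ are exactly the asks tradable with $\beta$, each fully traded, giving $\vol(M)=\sum_{a:\,\pr(a)\le\pr(\beta)}\q(a)$. On the other hand, any matching $M'$ can only trade $\beta$ with asks tradable with $\beta$, and each such $a$ contributes at most $\q(a)$ to $\Q(M',\id(\beta))$, so $\vol(M')\le\sum_{a:\,\pr(a)\le\pr(\beta)}\q(a)=\vol(M)$, as required.

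The main obstacle is the first, structural observation: recognizing that the legal-input hypothesis (non-matchability of the resident orders $(B,A)$) collapses the general maximum-matching problem into a single-order saturation statement about the incoming order $\beta$. Once this is in hand, the remainder is a short case analysis; the only points requiring care are using the multiset identities supplied by \texttt{conservation} to pass between ``$\beta$ (resp.\ $a$) has positive leftover quantity'' and ``$\beta\in\hat B$ (resp.\ $a\in\hat A$)'', and checking the edge case that an ask tradable with $\beta$ cannot remain entirely untraded in $M$.
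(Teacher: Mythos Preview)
Your proof is correct and follows essentially the same approach as the paper: the same case split on $\tau$, the same structural observation that in the $\buy\ \beta$ case every valid transaction over $(B',A')$ must involve $\beta$, and the same use of \texttt{conservation} together with \texttt{positive bid-ask spread} to rule out a leftover tradable pair in $(\hat B,\hat A)$. The only organizational difference is that the paper argues by contradiction (assuming $\vol(M)<\vol(M')$ and exhibiting a single undertraded ask that survives in $\hat A$ alongside $\beta\in\hat B$), whereas you give a direct argument that, when $\beta$ is not fully traded, every ask tradable with $\beta$ is fully exhausted in $M$, yielding the explicit formula $\vol(M)=\sum_{a:\,\pr(a)\le\pr(\beta)}\q(a)$; this is a minor stylistic variation rather than a different route.
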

\begin{proof}[Proof of Lemma~\ref{lem:max}]
Let us fix a process $P$, an order-domain $(B, A)$, and an instruction $\tau$ as in the lemma statement. Let  $(\hat B, \hat A, M)$ $=$ $P(B,A,\tau)$ and $(B',A')=\absorb(B,A,\tau)$. Let $M'$ be an arbitrary matching over $(B',A')$. We need to show $\vol(M)\geq \vol(M')$.

Now we have the following three cases depending on $\tau$.

\noindent{\bf Case: $\tau$ is $\del \ id$.} In this case $B' = \{b\in B| \id(b)\neq id\}\subseteq B$ and $A'=\{a\in A|\id(a)\neq id\}\subseteq A$ as $(B',A')=\absorb(B,A,\tau)$.

Since $(B,A)$ and $\tau$ forms a legal input, $(B,A)$ is not matchable. Consequently, $(B',A')=\absorb(B,A,\tau)$ is also not matchable as $B'\subseteq B$ and $A'\subseteq A$. Thus, no valid transaction over $(B',A')$ exists. Since the matching $M'$ is a set of valid transactions over $(B',A')$,  $M'=\emptyset$. Thus, $\vol(M')=0$, and we are trivially done.

\noindent{\bf Case: $\tau$ is  $\buy \ \beta$.}
In this case $B'=B\cup\{\beta\}$ and $A'=A$ as
$(B',A')=\absorb(B,A,\tau)$.

If $(B',A')$ is not matchable, then as in the previous case, $\vol(M')$ $=$ $0$, and we are trivially done.
Thus, we may assume $(B',A')$ is matchable. Now since $(B,A)$ is not matchable, only $\beta \in B'$ is tradable with some ask in $A'$. Thus, for all valid transactions $t$ over $(B',A')$, $\idb(t)=\id(\beta)$. 

We assume for contradiction: $\vol(M) < \vol(M')$. We will show that there exists a bid $\hat b\in\hat B$ and an ask $\hat a\in \hat A$ that remain resident and are tradable, contradicting \texttt{positive bid-ask spread}.

Since $M$ and $M'$ are matchings over $(B',A')$ consisting of only valid transactions $t$ such that $\idb(t)=\id(\beta)$ and $\vol(M) < \vol(M')$, there must exist an ask $a\in A'$ which is tradable with $\beta$, such that its trade quantity in $M$ is strictly less than its trade quantity in $M'$, i.e.,
$$ \Q_{\text{ask}}(M,\id(a)) < \Q_{\text{ask}}(M',\id(a))\leq\q(a), $$
$\text{ and } \beta \text{  and } a \text{ are tradable.}$ Thus, from \texttt{conservation} part c, a part of $a$ will remain resident in $\hat A$. In particular, there exists $\hat a \in \hat A$ such that $\pr(\hat a)=\pr(a)$. Now, since $\beta$ and $a$ are tradable, so are $\beta$ and $\hat a$.

 Again since $M'$ is a matching over $(B',A')$ consisting of only valid transactions $t$ such that $\idb(t)=\id(\beta)$, $\vol(M') \leq \q(\beta)$. Thus, from our assumption, $\vol(M) < \vol(M')\leq \q(\beta)$ which implies, from \texttt{conservation} part b, some part of $\beta$ remains untraded in $M$. In particular,
there exists $\hat \beta \in \hat B$ such that $\pr(\hat \beta)=\pr(\beta)$. Now since $\beta$ and $\hat a$ are tradable, $\hat \beta$ and $\hat a$ are tradable, which contradicts \texttt{positive bid-ask spread}, completing the proof.

\noindent{\bf Case: $\tau$ is  $\sell \ \alpha$.}
The proof in this case is symmetric to the above case.
\end{proof}

\section{Local Uniqueness}
In this section, we prove the local uniqueness theorem.
\begin{theorem}
\label{thm:processUniqueness}
Let $P_1$ and $P_2$ be processes that satisfy \texttt{price-time priority}, \texttt{positive bid-ask spread}, and \texttt{conservation}.
For all order-domain instruction pairs $((B,A),(\Delta,\omega))$ that form legal-inputs, if for each $i\in\{1,2\}$, $P_i(B,A,(\Delta, \omega))=(\hat B_i, \hat A_i, M_i)$, then \\
$(\hat {B_1},\hat{A_1},\mathcal C(M_1))=(\hat{B_2},\hat{A_2},\mathcal C(M_2))$. Furthermore, the following statements hold for each $i$.

\noindent    (1) $\ \hat{B_i} \text{ and } \hat{A_i} \text{ are not matchable. }$ \\
    (2) $\text{ The timestamps of orders in } \hat{B_i} \text{ and }\hat{A_i} \text{ are distinct} \text{ and form a} \text{ subset of the set of}$ \\
    $\text{ timestamps of the orders in } B\cup A\cup \{\ts(\omega)\}.$\\
    (3) $\text{ The ids of orders in } \hat{B_i} \text{ and }\hat{A_i} \text{ are distinct and form a subset} 
    \text{ of }  \ids(B\cup A)\cup\{\id(\omega)\}.$ \\
    (4) $ \ \Delta = \del \implies \id(\omega) \notin \ids(\hat{B_i}\cup \hat{A_i}).$
\end{theorem}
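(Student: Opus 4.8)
The plan is to fix processes $P_1,P_2$ and a legal-input $((B,A),(\Delta,\omega))$, write $(B',A')=\absorb(B,A,(\Delta,\omega))$, and split on the command $\Delta$. When $\Delta=\del$, legality forces $(B,A)$ to be non-matchable, and since $(B',A')$ is obtained by deleting orders it is non-matchable as well; hence no valid transaction exists over $(B',A')$ and $M_1=M_2=\emptyset$. \texttt{Conservation} parts b and c then give $\hat B_i = B'$ and $\hat A_i = A'$ for both $i$, so the triples coincide trivially. The substantive work is the case $\Delta=\buy$ (the case $\Delta=\sell$ being symmetric), which I treat next.

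In the buy case $B'=B\cup\{\beta\}$ and $A'=A$, and since $(B,A)$ is non-matchable every valid transaction over $(B',A')$ has bid id $\id(\beta)$; thus both $M_1$ and $M_2$ consist solely of transactions against $\beta$. First I would invoke Lemma~\ref{lem:max} twice: since each $M_i$ is a maximum matching over $(B',A')$ and each is itself a matching (\texttt{conservation} part a), $\vol(M_1)=\vol(M_2)=:V$. The key structural input is \texttt{price-time priority} part a: for each $i$, if an ask $a'$ is traded in $M_i$ and $a\succ a'$, then $a$ is fully traded in $M_i$. Because $(B',A')$ is admissible its timestamps are distinct, so $\succ$ is a strict total order on the asks tradable with $\beta$; I would list them as $a_1\succ a_2\succ\cdots\succ a_m$. \texttt{Price-time priority} then forces the traded asks of $M_i$ to be an initial segment $a_1,\dots,a_{k_i}$ in which $a_1,\dots,a_{k_i-1}$ are fully traded and only $a_{k_i}$ may be partial.

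With this greedy shape in hand, I would argue that the volume $V$ determines every per-ask quantity uniquely. Writing $C_k=\sum_{j\le k}\q(a_j)$ for the cumulative capacities, the constraints $0<\Q(M_i,\id(a_{k_i}))\le \q(a_{k_i})$ together with $\vol(M_i)=V$ force $C_{k_i-1}<V\le C_{k_i}$, which pins down a single index $k$ and a single partial quantity $V-C_{k-1}$, independent of $i$ (the degenerate $V=0$ case giving $M_1=M_2=\emptyset$). Hence $\Q(M_1,\id(a))=\Q(M_2,\id(a))$ for every ask $a$, and $\Q(M_1,\id(\beta))=\Q(M_2,\id(\beta))=V$. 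Since here every traded bid-ask pair has the form $(\beta,a)$, equality of per-ask quantities is exactly equality of per-pair total quantities, so $\mathcal C(M_1)=\mathcal C(M_2)$. Equal per-order traded quantities also give $\Bids(M_1,B')=\Bids(M_2,B')$ and $\Asks(M_1,A')=\Asks(M_2,A')$ as multisets, whence \texttt{conservation} parts b and c yield $\hat B_1=\hat B_2$ and $\hat A_1=\hat A_2$.

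Finally I would dispatch the four auxiliary claims. Claim (1) is immediate from \texttt{positive bid-ask spread} applied to each $P_i$. For (2) and (3) I would note that \texttt{conservation} parts b and c realize $\hat B_i,\hat A_i$ as orders of $B',A'$ with only their quantities possibly reduced, so their timestamps and ids are subsets of those of $B'\cup A'$; legality makes $(B',A')$ admissible (distinct timestamps and ids), and by the definition of $\absorb$ these are contained in the timestamps of $B\cup A\cup\{\ts(\omega)\}$ and in $\ids(B\cup A)\cup\{\id(\omega)\}$ respectively (cf.\ Propositions~\ref{prop:absorb1} and~\ref{prop:absorb3}). Claim (4) follows because when $\Delta=\del$ the set $\absorb(B,A,(\Delta,\omega))$ contains no order with id $\id(\omega)$, and $\hat B_i,\hat A_i$ only carry ids already present in $B'\cup A'$. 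The main obstacle is the buy/sell case: establishing rigorously that \texttt{price-time priority} forces the prefix-with-one-partial shape and then that the common volume $V$ uniquely determines the per-ask split; everything else is bookkeeping built on Lemma~\ref{lem:max}, \texttt{conservation}, and admissibility of $\absorb$.
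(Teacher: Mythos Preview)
Your proposal is correct and follows essentially the same route as the paper: case split on $\Delta$, use Lemma~\ref{lem:max} to equate $\vol(M_1)=\vol(M_2)$, use \texttt{price-time priority} to pin down the per-ask traded quantities, and then read off $\hat B_1=\hat B_2$, $\hat A_1=\hat A_2$ from \texttt{conservation}. The only notable difference is in the central step of the $\buy$ case: you argue constructively that \texttt{price-time priority} forces each $M_i$ to have a ``prefix-with-one-partial'' shape and then show the common volume $V$ determines the cutoff and partial quantity uniquely, whereas the paper argues by contradiction---assuming some ask $a$ has $\Q(M_1,\id(a))>\Q(M_2,\id(a))$, equal volumes yield a compensating ask $a'$ with the reverse inequality, and whichever of $a\succ a'$ or $a'\succ a$ holds, \texttt{price-time priority} is violated in one of the two matchings.
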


\begin{proof}
Fix processes $P_1$ and $P_2$ that satisfy the three properties. Fix an order-domain $(B,A)$ and an instruction $(\Delta, \omega)$ which forms a legal-input. Let for each $i\in\{1,2\}$ $(\hat B_i, \hat A_i, M_i) = P_i(B,A,(\Delta,\omega))$. Let $(B',A')=\absorb(B,A,(\Delta,\omega))$.

We will show $(\hat B_1, \hat A_1,\mathcal C(M_1))=(\hat B_2,\hat A_2,\mathcal C(M_2))$. 

Proofs of (1)-(4) above are relatively straightforward, and we omit detailed proofs:
(1) is precisely \texttt{positive bid-ask spread}. (2) and (3) follow from the fact that timestamps and ids of the resident orders arise from the timestamps of the orders in the order-domain $(B',A')$, which is admissible. (4) follows from the fact that if $\Delta = \del$, then $\id(w)\notin \ids(A'\cup B')$.

To show $(B_1,A_1,\mathcal C(M_1))=(B_2,A_2,\mathcal C(M_2))$, we will do a case analysis based on $\Delta$.

First note that both $M_1$ and $M_2$ are maximum volume matching between $B'$ and $A'$ from Lemma~\ref{lem:max}. 

\noindent {\bf Case 1: $\Delta$ is $\del$.} In this case $B'\subseteq B$ and $A'\subseteq A$ as $(B',A')=\absorb(B,A,(\Delta,\omega))$. Now, since $(B,A)$ is not matchable, $(B',A')$ is also not matchable, implying $M_1=M_2=\emptyset$ as no valid transactions exist over the order-domain $(B',A')$. From \texttt{conservation} parts b and c, we have $\hat B_i=B'$ and $\hat A_i = A'$ for each $i\in\{1,2\}$, and we are done.

\noindent {\bf Case 2: $\Delta$ is $\buy$.} In this case $B'= B\cup \{\omega\}$ and $A'=A$ as $(B',A')=\absorb(B,A,(\Delta,\omega))$. Also since $(B,A)$ is not matchable, then every valid transaction $t$ over the order-domain $(B',A')$ is such that $\idb(t)=\id(\omega)$.

If $\omega$ is not tradable with any ask in $A'$, then we are done like in the previous case as there are no possible valid transactions over $(B',A')$, implying $M_1=M_2=\emptyset$. And as before $B_i=B'$ and $A_i=A'$ for each $i\in\{1,2\}$ follows from \texttt{conservation}, and we are done.

Thus, we may assume $\omega$ is tradable with some ask in $A'$. We will first show that for each ask $a\in A$, the total traded volume of $a$ in $M_1$ is equal to the total traded volume of $a$ in $M_2$. Assume for the sake of contradiction, there exists an ask $a$ that has more traded quantity in $M_1$ than in $M_2$. This implies $a$ is partially traded in $M_2$. Since trade volumes of both $M_1$ and $M_2$ are equal (as they are both maximum), there must be an ask $a'$ such that total traded volume of $a'$ in $M_2$ is more than that in $M_1$. In particular, this implies that $a'$ has at least one transaction in $M_2$. 

Now either $a \succ a'$ or $a' \succ a$. First assume $a \succ a'$. The fact that a more competitive order $a$ is partially traded in $M_2$ while a less competitive order $a'$ is being traded contradicts \texttt{price-time priority}. Next assume $a' \succ a$, the fact that in $M_1$, $a'$ is partially traded and $a$ is also traded, leads to a similar contradiction. Thus, for all $a\in A'$, $\Q_\text{ask}(M_1,\id(a))=\Q_\text{ask}(M_2,\id(a))$.

Now recall in each valid transaction $t$ over the order-domain $(B',A')$ $\idb(t)=\id(\omega)$. As $M_i$ consists of only such transactions $t$, $\mathcal C (M_i)$ consists of transactions of the form $(\id(\omega),\id(a),q)$ where $q=\Q_\text{ask}(M_i, \id(a))$ and $a\in A'$. Now since  for all $a\in A'$, $\Q_\text{ask}(M_1,$ $\id(a))=\Q_\text{ask}(M_2,\id(a))$, a transaction is in $\mathcal C (M_1)$ iff it is in $\mathcal C(M_2)$, proving $\mathcal C (M_1)=\mathcal C(M_2)$.

Now we prove $\hat{A_1}=\hat{A_2}$. Fix an $i\in\{1,2\}$. It follows from \texttt{conservation} part c that each $a\in A'$ which is not fully traded in $M_i$ by $P_i$ will have a corresponding $\hat{a_i} \in \hat A_i$ such that $\q(\hat{a_i})$ $= \q(a) - \Q_\text{ask}(M_i, \id(a))$ and $(\id(\hat a),\ts(\hat a),\pr(\hat a)) =(\id(a),\ts(a),\pr(a))$. Furthermore, no element exists in $\hat A_i$ whose $\id$, $\ts$, and $\pr$ do not match with the respective attributes of some element in $A'$. Now, we proved earlier that for all $a\in A'$, $\Q_\text{ask}(M_1,\id(a))=\Q_\text{ask}(M_2,\id(a))$. Combining these facts, we get $\hat{A_1}= \hat{A_2}$.

$\hat{B_1} = \hat{B_2}$ follows from \texttt{conservation} part b: Observe that if $\omega$ gets fully traded in $M_1$ and $M_2$, $\hat B_1 = \hat B_2 = B$, as all elements of $B$ remain resident from \texttt{conservation}. If $\omega$ gets partially traded  in $M_1$ and $M_2$, then $\hat B_1$ $=$ $\hat B_2 = B \cup \{\hat\omega\}$, where $(\id(\hat\omega), \ts(\hat\omega), \pr(\hat\omega))$ $= (\id(\omega),  \ts(\omega), \pr(\omega))$ and $q(\hat\omega)= \q(\omega)-\vol(M_i)$ (for each $i\in\{1,2\})$.

\noindent {\bf Case 3: $\Delta$ is $\sell$.} Here, the proof is symmetric to the proof in Case 2.
\end{proof}


\section{Global uniqueness}

In this section, we prove the global uniqueness theorem.

\begin{theorem}\label{thm:globalUniquness}
Let $P_1$ and $P_2$ be processes that satisfy \texttt{positive bid-ask spread}, \texttt{price-time priority}, and \texttt{conservation}. Then, for all structured order books $\mathcal I$ and natural numbers $k$ if $\iter(P_1, \mathcal I, k)$ $=$ $(B_1,A_1,M_1)$ and $\iter(P_2, \mathcal I, k)$ $=$ $(B_2,A_2,M_2)$, then $(B_1,A_1,\mathcal C(M_1))$ $=$ $(B_2,A_2,\mathcal C(M_2))$.
\end{theorem}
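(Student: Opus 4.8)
The plan is to prove Theorem~\ref{thm:globalUniquness} by induction on $k$, lifting the local uniqueness result (Theorem~\ref{thm:processUniqueness}) one time step at a time. The key conceptual point is that $\iter(P,\mathcal I, k)$ is computed recursively by feeding the resident orders output at time $k-1$ back into the process together with the $k$-th instruction. Thus, if I can show that both processes agree on the resident orders up through time $k-1$, then at time $k$ they receive essentially the same input, and Theorem~\ref{thm:processUniqueness} forces them to agree on the matching and the new resident orders. The subtlety is that Theorem~\ref{thm:processUniqueness} requires its input to be a \emph{legal-input}, so the induction must carry enough structural information to guarantee that this precondition is met at every step.

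Concretely, I would strengthen the induction hypothesis beyond the bare equality $(B_1,A_1)=(B_2,A_2)$. For each $k$, writing $\iter(P_i,\mathcal I, k)=(B_i,A_i,M_i)$, I would prove simultaneously that (a) $B_1=B_2$ and $A_1=A_2$ (exact equality of the resident sets, not just their canonical matchings, since resident orders carry no transaction structure), (b) $(B_i,A_i)$ is not matchable, and (c) $(B_i,A_i)$ is admissible, i.e.\ its ids and timestamps are distinct and are drawn from those appearing in the first $k$ instructions of $\mathcal I$. Parts (b) and (c) are exactly what is needed to certify that the order-domain $(B_i,A_i)$ together with the $(k{+}1)$-th instruction $(\Delta_{k+1},\omega_{k+1})$ forms a legal-input for the inductive step: non-matchability is immediate from (b), and admissibility of $\absorb(B_i,A_i,(\Delta_{k+1},\omega_{k+1}))$ follows by combining (c) with the \textbf{structured} property of $\mathcal I$ (increasing timestamps and the distinct-id condition) via Propositions~\ref{prop:absorb1}--\ref{prop:absorb3}. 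These are precisely the ``technical conditions'' alluded to in the theorem's description of Theorem~\ref{thm:processUniqueness}.

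For the base case $k=0$ (or $k=1$, depending on indexing), the resident sets are empty, which is trivially non-matchable and admissible, and both processes receive the identical first instruction, so Theorem~\ref{thm:processUniqueness} applies directly. For the inductive step, assuming (a)--(c) at time $k$, I feed the common resident domain $(B,A):=(B_1,A_1)=(B_2,A_2)$ and the instruction $(\Delta_{k+1},\omega_{k+1})$ into both processes. Having verified the legal-input precondition as above, Theorem~\ref{thm:processUniqueness} yields $(\hat B_1,\hat A_1,\mathcal C(M_1'))=(\hat B_2,\hat A_2,\mathcal C(M_2'))$ at time $k+1$, which gives (a) and the desired conclusion $(B_1,A_1,\mathcal C(M_1))=(B_2,A_2,\mathcal C(M_2))$; conclusions (1)--(4) of Theorem~\ref{thm:processUniqueness} re-establish (b) and (c) for time $k+1$, closing the induction. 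I would also need to handle the boundary case $k>\mathsf{length}(\mathcal I)$ separately, where $\iter$ returns $(\emptyset,\emptyset,\emptyset)$ for both processes and the equality is immediate.

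The main obstacle is bookkeeping the admissibility invariant (c) correctly through $\absorb$, rather than any deep difficulty: one must check that the relaxed distinct-id condition in the definition of \textbf{structured} (allowing an id to coincide with that of an immediately preceding $\del$) still guarantees that $\absorb(B,A,(\Delta_{k+1},\omega_{k+1}))$ has distinct ids. This requires verifying that whenever $\omega_{k+1}$ reuses an id, that id was just deleted and hence does not survive among the resident orders, so no clash is introduced---this is exactly why the structured-order-book definition was set up with its third alternative. The timestamp condition is comparatively easy because timestamps are strictly increasing along $\mathcal I$, so the new order's timestamp is fresh relative to all resident orders. Once these invariants are in place, the induction itself is routine, as the real content of per-step uniqueness has already been discharged by Theorem~\ref{thm:processUniqueness}.
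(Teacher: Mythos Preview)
Your proposal is correct and follows essentially the same approach as the paper: strengthen the induction hypothesis to carry non-matchability and admissibility invariants, then apply Theorem~\ref{thm:processUniqueness} at each step after verifying the legal-input precondition via Propositions~\ref{prop:absorb1}--\ref{prop:absorb3}. The one refinement is that the paper carries as an explicit extra invariant (its property~(5)) that after a $\del$ instruction the deleted id is absent from the resident orders---this is exactly the fact you flag in your last paragraph as needed for the id-reuse case, and it is supplied by conclusion~(4) of Theorem~\ref{thm:processUniqueness}; adding it to your list (a)--(c) makes the induction close cleanly, since your stated~(c) alone does not imply it.
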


In our Coq formalization, this theorem appears as follows.

\includegraphics{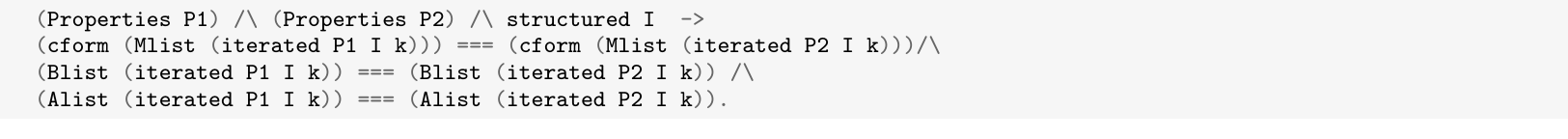}

To prove the above theorem we lift the local uniqueness theorem, Theorem~\ref{thm:processUniqueness}, using induction. We achieve this by strengthening the theorem statement that  provides us with a strong enough induction hypothesis for establishing the pre-conditions needed to apply Theorem~\ref{thm:processUniqueness}.
\begin{proof}[Proof of Theorem~\ref{thm:globalUniquness}.]
Fix processes $P_1$ and $P_2$ and a structured order book $\mathcal I = $ 
$ [(\Delta_0,\omega_0),\cdots$ $,\Delta_n,\omega_n)]$, and let $\ip_i(k)$ denote $\iter(P_i,\mathcal I, k)$ for $i\in\{1,2\}$.

Notice for $k> \mathsf{length}(\mathcal I)$ $=$ $n+1$, $\ip_1(k)=\ip_2(k)$ $=$ $(\emptyset,\emptyset,\emptyset)$ and we are trivially done. Now we will show the theorem statement holds for $k\leq n+1$. More generally, we prove the following statement by induction on $k$.

For all $k\leq n+1$, if  $\ip_1(k) = (B_1^k, A_1^k, M_1^k)$ and $\ip_2(k)$ $=$ $(B_2^k, A_2^k, M_2^k)$, then the following properties hold.

    (1) $(B_1^k,A_1^k, \mathcal C(M_i^k))= (B_2^k,A_2^k,\mathcal C(M_i^k)).$
    
    (2) $B_1^k$  and $A_1^k$  are not matchable. 
    
    (3)   The timestamps of orders in  $B_1^k\cup A_1^k$ \ are distinct and form a subset of the set of timestamps of  $\{\omega_0,\cdots,\omega_{k-1}\}$.
    
    (4) Orders in $B_1^k \cup A_1^k$  have distinct ids and they belong
     to 
    $\ids(\{\omega_0,\cdots,\omega_{k-1}\})$.
    
    (5)  $k\geq 1$ and $\Delta_{k-1} = \del \implies \id(\omega_{k-1}) \notin \ids(B_1^k\cup A_1^k)$.

Observe that proving (1) above would complete the proof of Theorem~\ref{thm:globalUniquness}.

Base case: When $k=0$, $\ip_1(0)=\ip_2(0)=(\emptyset,\emptyset,\emptyset)$, and properties (1)-(5) trivially hold.
 
 We now assume that the above statement (properties (1)-(5)) holds for  $k = t$ and we will show that the above statement holds for $k=t+1$. Notice for each $i\in\{1,2\}$, $\ip_i(t+1)$ first computes $\ip_i(t)$ recursively to obtain $(B_i^{t},A_i^{t})$ and then returns $P_i(B_i^{t},A_i^{t},(\Delta_{t}, \omega_{t}))=(B_i^{t+1},A_i^{t+1},M_i^{t+1})$. Our proof proceeds in two parts. First, using the induction hypothesis we prove that $(B_i^{t},A_i^{t},(\Delta_{t}, \omega_{t}))$ forms a legal input, which is the pre-condition to apply Theorem~\ref{thm:processUniqueness}. Next, we invoke Theorem~\ref{thm:processUniqueness} and establish that properties (1)-(5) hold for $k=t+1$.
 
 {\bf First part:} $(B_i^{t},A_i^{t},(\Delta_{t}, \omega_{t}))$ forms a legal input. To show this, observe from the definition of legal input, we need to show (i) $B_i^t$ and $A_i^t$ are not matchable and (ii) $\absorb(B_i^t,$ $A_i^t,(\Delta_{t},\omega_{t}))$ forms an admissible order-domain. (i) is immediate from the induction hypothesis (property (2)). To show (ii) we need to show that if $\absorb(B_i^t,A_i^t,(\Delta_{t},\omega_{t}))=(B',A')$, then the timestamps of the orders in $B'\cup A'$ are all distinct and the ids of the orders in $B'\cup A'$ are all distinct. The timestamps of the orders in $B'\cup A'$ are distinct follows from Proposition~\ref{prop:absorb1}, since from the induction hypothesis (property(3)) it follows that the orders in $B_1^t$ and $A_1^t$ have distinct timestamps from the set $\{\ts(\omega_0),\cdots,\ts(\omega_{t-1})\}$, and they are strictly smaller than $\ts(\omega_{t})$ as $\mathcal I$ is structured. Below we show that the ids of $B'\cup A'$ are distinct by considering two cases: $\Delta_t=\del$ and  $\Delta_t\neq\del$.
\begin{itemize}
    \item Case: $\Delta_t = \del$. We are immediately done from Proposition~\ref{prop:absorb2}, since orders in $B_1^t\cup A_1^t$ have distinct ids from the induction hypothesis (property (4)).
    \item Case: $\Delta_t\neq\del$. Since $\Delta_t\neq\del$ and $\mathcal I$ is structured, either (a) $\id(\omega_{t}) \notin \ids\{\omega_0,\cdots, \omega_{t-1}\}$ or
   (b) $\id(\omega_{t}) = \id(\omega_{t-1})$ and $\Delta_{t-1}=\del$. In either case we claim that $\id(\omega_{t})\notin \ids(A_1^t \cup B_1^t)$. From the induction hypothesis (property (4)), we know orders in $B_1^t\cup A_1^t$ have distinct ids from the set $\ids(\{\omega_0,\cdots,\omega_{t-1}\})$. So in case (a), it immediately follows that $\id(\omega_{t})\notin \ids(A_1^t \cup B_1^t)$. In case (b), since $\Delta_t=\del$, from induction hypothesis (property (5)), we have that $\id(\omega_{t-1})\notin \ids(B_1^t\cup A_1^t)$. In case (b) we also have $\id(\omega_{t})=\id(\omega_{t-1})$. Thus, combining these two facts, we have $\id(\omega_{t})\notin\ids(B_1^t\cup A_1^t)$. Now using the claim and the fact that ids of orders in $B_1^t\cup A_1^t$ are all distinct, we invoke Proposition~\ref{prop:absorb3} to get that ids of orders in $B'\cup A'$ are all distinct.
\end{itemize} 

{\bf Second part:} properties (1)-(5) hold for $k=t+1$.
(1) holds: From the induction hypothesis (property (1)), we have $(B_1^k,A_1^k)=(B_2^k,A_2^k)$. Now since $(B_1^k,A_1^k,(\Delta_t,\omega_t))$ forms a legal input, we are immediately done by invoking Theorem~\ref{thm:processUniqueness}.
(2) and (5) also immediately follow by invoking Theorem~\ref{thm:processUniqueness}.
(4) holds: Theorem~\ref{thm:processUniqueness} implies that the ids of orders in $B_1^{t+1}\cup A_1^{t+1}$ are all distinct and form a subset of $(\ids(B_1^t\cup A_1^t)\cup \{\id(\omega_t)\})$. Now from induction hypothesis (property (4)), we have  $\ids(B_1^t\cup A_1^t)\subseteq\ids(\{\omega_0,\cdots,\omega_{t-1}\})$. Thus, ids of the orders in $B_1^{t+1}\cup A_1^{t+1}$ forms a subset of $\ids(\{\omega_0,\cdots\omega_t\})$, and we are done.
Using an almost identical argument, we can show that (3) holds.\end{proof}


\section{Verified Algorithm}\label{sec:verified}

Here we introduce a natural algorithm for continuous double auctions. 

 \begin{algorithm}
 \caption{Process for continuous market}\label{process_alg}
\includegraphics[scale =.75]{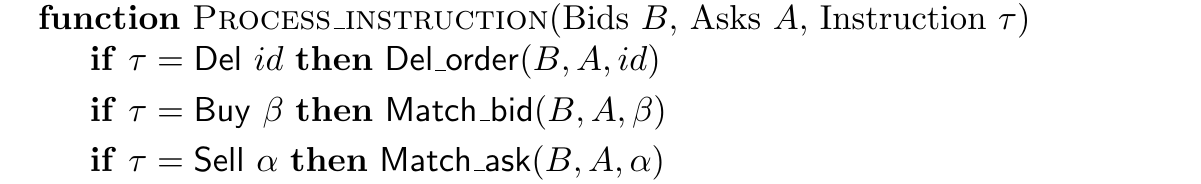}
 \end{algorithm}

In the Coq formalization of Process\_instruction, we sort the list of asks and bids by their competitiveness before calling a subroutine; as a result, the most competitive bid and ask are on top of their respective lists.

 \begin{algorithm}
 \caption{Matching an ask}\label{match}
\includegraphics[scale =.75]{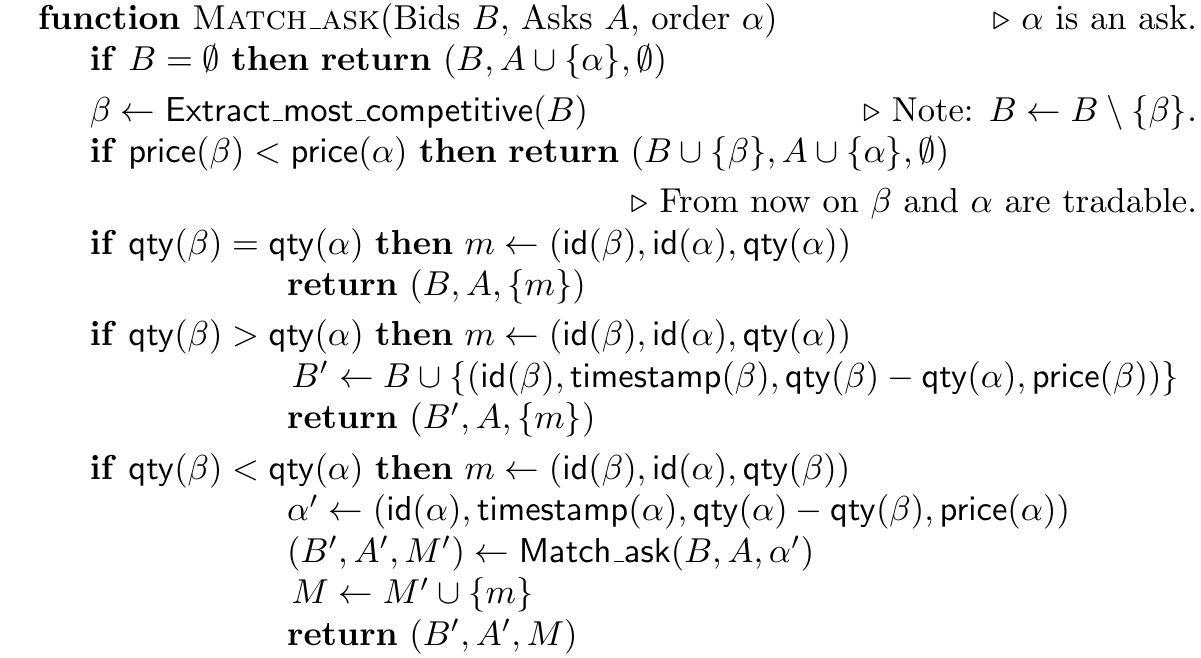}
 \end{algorithm}

The $\matchbid$ subroutine is symmetric to the $\matchask$ subroutine and we do not present it explicitly here.

 \begin{algorithm}
 \caption{Deleting an order}\label{setp}
\includegraphics[scale =.75]{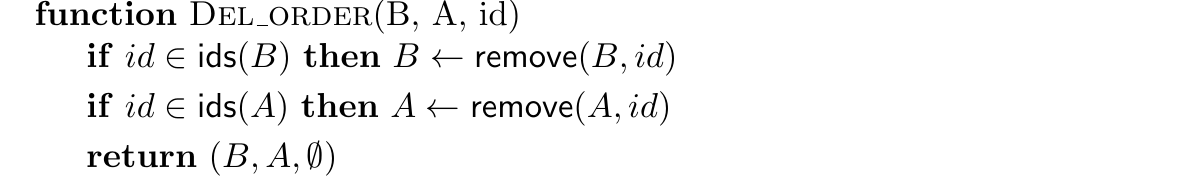}
 \end{algorithm}


Next, we show that the above algorithm  satisfies the three natural properties.
\begin{theorem}\label{thm:Process_instructionSatisfiesProperties}
Process\_instruction satisfies \texttt{positive bid-ask spread}, \texttt{price-time} \\ \texttt{priority}, and \texttt{conservation}.
\end{theorem}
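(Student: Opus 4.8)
The plan is to unfold the definition of Process\_instruction and do a case analysis on the command $\Delta$ of the incoming instruction $\tau=(\Delta,\omega)$, checking all three properties in each case. Throughout I would fix a legal-input $((B,A),\tau)$, write $(B',A')=\absorb(B,A,\tau)$, and let $(\hat B,\hat A,M)$ be the output of the algorithm. Being a legal-input means $(B,A)$ is not matchable and $(B',A')$ is admissible; both facts get used repeatedly. The delete case is immediate: $\deleteorder$ generates $M=\emptyset$ and returns $\hat B=B'$, $\hat A=A'$. Conservation then reduces to $\hat B=B'$ and $\hat A=A'$ (since $\Bids(\emptyset)=\Asks(\emptyset)=\emptyset$ and the empty set is vacuously a matching over $(B',A')$), price-time priority holds vacuously because $\idsa(M)=\idsb(M)=\emptyset$, and positive bid-ask spread holds because $(B',A')$ is a sub-order-domain of the non-matchable $(B,A)$.

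The buy and sell cases are symmetric, so I would treat $\tau=\sell\ \alpha$ in detail, where $B'=B$, $A'=A\cup\{\alpha\}$, and the algorithm sorts the resident bids by competitiveness (most competitive first) and calls $\matchask$ on $\alpha$ and this sorted list. The heart of the argument is an invariant for $\matchask$, proved by induction on the bid list (equivalently, on the recursion depth). Concretely, I would show that running $\matchask$ on a competitiveness-sorted bid-list and an ask of remaining quantity $q$ returns transactions $M$ and a residual configuration such that: (i) every transaction of $M$ has ask id $\id(\alpha)$, is between tradable orders, and respects the remaining quantities, so $M$ is a valid matching over $(B',A')$ (conservation~a); (ii) the bids touched by $M$ form a prefix of the competitiveness order, every bid strictly more competitive than a partially-traded bid being fully traded and every bid less competitive than all traded bids left untouched (this is exactly price-time priority~b); (iii) the residual bid side equals $B-\Bids(M)$ and the residual ask side equals $(A\cup\{\alpha\})-\Asks(M)$ (conservation~b,c); and (iv) matching halts only when either $\alpha$ is exhausted or the most competitive remaining bid is not tradable with the (reduced) ask.

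From (iv), together with the fact that the most competitive bid carries the maximal limit price, I would conclude that when $\alpha$ stays resident no remaining bid is tradable with the residual $\hat\alpha$; and since the residual bids are a subset of $B$ they remain non-matchable with $A$, while asks never match asks, giving positive bid-ask spread. Price-time priority~a (the ask side) then holds vacuously: $\alpha$ is the only traded ask, and because $(B,A)$ is non-matchable while $\alpha$ is tradable with some bid, every $a\in A$ has $\pr(a)>\pr(\alpha)$, so no ask of $A'$ is more competitive than $\alpha$ and the hypothesis of part~a is never met. Reassembling conditions (i)--(iii) for each branch of the algorithm, and appealing to the $\matchbid$ invariant (symmetric to $\matchask$) for the buy case, completes all three properties.

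I expect the main obstacle to be formulating the $\matchask$ invariant strongly enough that conservation, price-time priority, and positive bid-ask spread all fall out simultaneously, while keeping it in a shape that is preserved by the recursive step. The delicate points are tracking the decreasing remaining quantity of $\alpha$ across recursive calls so that the per-order bounds of (i) and the exact multiset identities of (iii) hold, and coupling the sortedness of the bid list with the greedy choice so that the prefix structure of (ii) is maintained and the halting condition (iv) genuinely certifies non-tradability of \emph{all} remaining bids rather than just the top one. Once this invariant is in place, the remaining bookkeeping for each case of $\Delta$ is routine.
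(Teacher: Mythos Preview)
Your proposal is correct and follows essentially the same route as the paper: case analysis on $\Delta$, a trivial delete case, buy/sell symmetry, and an induction on the bid list for $\matchask$ that splits on whether the top bid is tradable with $\alpha$ and on the sign of $\q(\beta)-\q(\alpha)$. The only difference is packaging: the paper proves three separate lemmas (one per property), each with its own induction on $|B|$, whereas you bundle conditions (i)--(iv) into a single invariant; the inductive case analysis and the key observations (e.g., that when $M\neq\emptyset$ every $a\in A$ has $\pr(a)>\pr(\alpha)$, and that halting on a non-tradable top bid certifies non-tradability for all remaining bids) are the same in both.
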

 
 In our Coq formalization, the above theorem statement appears as follows.

\includegraphics{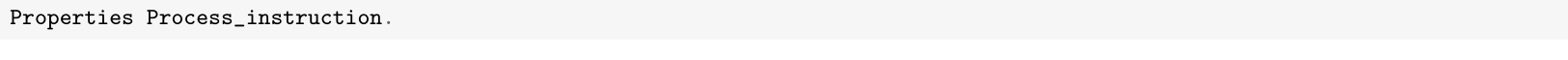}


 The proof of this result is outlined in Appendix~\ref{sec:algoCorrectness}.


\section{Application: checker}
\label{sec:application}

In this section, we discuss how we use our verified algorithm and the global uniqueness theorem to design an automated checker that detects errors in exchange algorithms that run continuous double auctions from their trade logs. We implement such a checker and run it on trade logs from an exchange. We include, as part of the supplementary materials accompanying this paper, the OCaml source code of our checker, trade logs for two example stocks (created from real stocks from a stock exchange after appropriate preprocessing and masking), and a shell script that compiles our code and runs it on the trade logs of the two stocks.

As discussed in the paper, exchanges for each traded product maintain an order book that contains all the incoming orders and a corresponding trade book  that contains all the transactions that are generated as trade logs. These trade logs are accessible to the regulators. We show a market regulator can use these trade logs to automatically determine whether an exchange is complying with the three natural requirements of \texttt{price-time priority}, \texttt{positive bid-ask spread}, and \texttt{conservation} while generating transactions. Recall, as implied by our global uniqueness theorem, these three properties completely specify continuous double auctions.

\subsection{Algorithm for checker}
Given an order book and the corresponding trade book, the checker first runs the verified matching algorithm on the order book to generate `verified' matchings. Then, the canonical forms of the matchings in the trade book with the verified matchings are compared, one time-step at a time. If for any time-step, the canonical forms of the matchings do not match, the checker outputs a mismatch detected message along with the corresponding matchings. Otherwise, if all matchings match, the checker terminates without a mismatch message.

For a given order book and trade book, if the above checker finds a mismatch, then we can conclude, from the global uniqueness theorem, that the exchange algorithm violates at least one of the three properties. Otherwise, at least for the instance at hand, the exchange algorithm output is as good as that of a verified algorithm.

In our implementation of the checker, the verified matching algorithm and the canonical form functions are directly extracted from our formalization using Coq's code extraction feature and used as subroutines.

\subsection{Preprocessing an order book}
Real exchanges implement more complex instruction types than our three primitives: buy/sell/delete. Here we briefly discuss some of these instruction types and how to convert them into primitive types in a preprocessing step.

\begin{itemize}
    \item{Updates.} A buyer or seller might want to update the quantity or price of his order using an update order instruction. For our test exchange, the rule is if the quantity of the order decreases, then the timestamp of the original order is retained. Otherwise, if the quantity is increased or the price changes the new order carries the timestamp of the update instruction. Our test exchange as part of the update instruction also maintains the information of the quantity of the order that remains untraded. Updates can be easily implemented by replacing it with a delete instruction followed by a new buy/sell instruction with the updated attributes.
    \item{Market orders.} Market orders are orders that do not specify a limit price and are ready to be traded at any price. We can implement such an order quite easily by keeping the price $0$ for a sell order and $\infty$ (in our implementation we use the maximum number supported by the system) for a buy order.
    \item{Immediate or cancel orders (IOCs).} An IOC is an order that needs to be immediately removed from the system after it is processed, i.e., it should never become a resident order. Such orders can be implemented by replacing them with a buy/sell instruction followed by a delete instruction.
    \item{Stop-loss orders.} Stop-loss orders are orders that are triggered when certain events happen, like when the price of a transaction goes below a threshold. For our test exchange, the timestamp of the triggering event is provided. Consequently, these orders can be treated as normal orders when inserted in the order book at a position corresponding to the timestamp of the triggering event.
\end{itemize}

Note that the preprocessing step can at the most double the number of instructions in the order book. Apart from these standard orders discussed above, certain exchanges allow for more complicated orders which include `iceberg orders'. Iceberg orders are rare and are more complicated to preprocess where the priority of the orders depends on factors that cannot be fully determined by just price and time. Although we can implement such orders through some preprocessing hacks, ideally the matching algorithm should be enriched to handle such orders.

\subsection{Demonstration}
We include, as part of the supplementary materials~\cite{suppl} accompanying this paper, the OCaml source code of our checker, trade logs for two example stocks (created from real data after appropriate preprocessing and masking), and a shell script that compiles our code and runs it on the trade logs of the two stocks. The checker takes under a second to process both the order books. For the first stock, the order book has about $16000$ instructions; the checker detects that there is no mismatch. For the second stock, the order book has about $15000$ instructions; the checker detects a mismatch. On inspection, we found that a delete instruction appears in the order book before the corresponding order was placed causing the mismatch. 

We believe that it would be difficult to detect such rare anomalies without the systematic application of formal methods; in absence of formal specifications, uniqueness theorems, and verified algorithms, such anomalies can be easily brushed aside as complex and probably unavoidable program behavior.

\section{Conclusions}
In this work, we formally introduced continuous double auctions and specified them in terms of three natural and necessary properties. Then, we showed that a natural algorithm possesses these properties. Based on our work, we were able to produce a checker that automatically detects errors in existing exchange algorithms by going over their trade logs.

There are many opportunities in the field of certified auctions. For continuous double auctions alone, one could look at various sophisticated priority rules. For example, what would be the specifications for parity/priority matchings (as used by NYSE)? What would be the specifications for price-time priority matchings in the presence of iceberg orders? What would be the specifications for decentralized exchanges (for cryptocurrencies)?

With this work and previous works in the field, a new paradigm for formalizing various auction mechanisms is emerging, which we believe has useful policy implications for regulators. With the formal methods technology, it is now possible to formally verify systems and require auctioneers to implement them, and also check if the implementation is correct by checking the logs of the auctions that were conducted.

\bibliography{market}


\appendix\label{appendix}
\counterwithin{theorem}{section}


\section{Correctness of Process\_instruction} \label{sec:algoCorrectness} 


As Process\_instruction on each input invokes one of $\matchask$, $\matchbid$, and $\deleteorder$, it is enough to show that these subroutines have the desired properties. It is relatively straightforward to show that $\deleteorder$ has the properties and the proof for $\matchbid$ and $\matchask$ are completely symmetric. Thus, in our presentation, we just show that $\matchask$ has the desired properties.

\subsection{Match\_Ask satisfies positive bid-ask spread}

 The $\matchask$ subroutine satisfies \texttt{positive bid-ask spread} follows immediately from the following lemma.

\begin{lemma}\label{lem:Match_AskSatisfiesPositiveBidAskSpread}
Let $(B,A\cup\{\alpha\})$ be an admissible order-domain. If $\matchask(B,A,\alpha)=(\hat B, \hat A, M)$ and $(B,A)$ is not matchable, then $(\hat B,\hat A)$ is not matchable.
\end{lemma}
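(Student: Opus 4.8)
The plan is to induct on the number of resident bids $|B|$, following the recursive structure of $\matchask$. The single observation that drives the whole argument is that matchability depends only on limit prices, never on quantities: by definition, an order-domain is matchable precisely when it contains a bid $b$ and an ask $a$ with $\pr(b)\ge\pr(a)$. Consequently, reducing a bid's quantity or deleting a bid outright can never create a matchable pair, so the only potential source of matchability in the output is the newly processed ask $\alpha$ (or its residual part that survives matching).

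In the base case $B=\emptyset$ the subroutine makes $\alpha$ resident and returns $\hat B=\emptyset$; with no resident bids, $(\hat B,\hat A)$ is vacuously not matchable. For the inductive step, let $b$ be the most competitive bid (the top of the sorted list) and split on how $\matchask$ handles it. If $b$ is not tradable with $\alpha$, then $\alpha$ is made resident and $\hat B=B$, $\hat A=A\cup\{\alpha\}$; since $b$ is price-maximal among the bids, every $b'\in B$ has $\pr(b')\le\pr(b)<\pr(\alpha)$, so no resident bid is tradable with $\alpha$, and combined with the hypothesis that $(B,A)$ is not matchable we conclude that $(B,A\cup\{\alpha\})$ is not matchable. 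If $b$ is tradable with $\alpha$ and $\alpha$ is fully consumed by the transaction, then $\hat A=A$ and $\hat B$ is obtained from $B$ by only deleting $b$ or shrinking its quantity, so the price-only characterization together with the assumption on $(B,A)$ gives the claim at once. Finally, if $b$ is tradable but exhausted first, $b$ is deleted and $\matchask$ recurses on $(B\setminus\{b\},A,\alpha')$, where $\alpha'$ shares the id, timestamp, and price of $\alpha$ but carries a smaller quantity; I would first verify that this recursive call still meets the lemma's hypotheses — $(B\setminus\{b\},A\cup\{\alpha'\})$ remains admissible, as its ids and timestamps are inherited from the admissible $(B,A\cup\{\alpha\})$, and $(B\setminus\{b\},A)$ remains not matchable — and then invoke the induction hypothesis, which applies since $|B\setminus\{b\}|<|B|$. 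Because the resident orders returned by $\matchask(B,A,\alpha)$ coincide with those returned by the recursive call (the one extra transaction with $b$ does not affect $(\hat B,\hat A)$), the conclusion transfers directly.

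The main obstacle is the untradable-top-bid case, where one must rule out that \emph{some} resident bid is tradable with the resident $\alpha$, not merely that the top bid fails. This is exactly where the sorting by competitiveness is indispensable: because competitiveness refines the price order (higher price first), the top bid is price-maximal among all bids, so its failure to trade with $\alpha$ forces every other bid to fail as well. Once this point is secured, the remaining cases are routine consequences of the fact that matchability ignores quantities.
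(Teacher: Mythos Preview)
Your proposal is correct and follows essentially the same approach as the paper: induction on $|B|$, a case split on whether the top bid $\beta$ is tradable with $\alpha$, and in the tradable case a further split on which order is exhausted first, with the induction hypothesis invoked in the recursive sub-case. The only cosmetic difference is that you merge the paper's two sub-cases $\q(\beta)=\q(\alpha)$ and $\q(\beta)>\q(\alpha)$ into a single ``$\alpha$ fully consumed'' case, justified by your explicit observation that matchability depends only on prices; the paper treats these separately but with the same one-line reasoning.
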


\begin{proof}
Fix an $A$ and an $\alpha$.  We prove the lemma by induction on $|B|$.

\noindent\underline{Base case}: trivial.

\noindent\underline{Induction hypothesis}:  For all $|B|<k$, If $(B,A\cup\{\alpha\})$ is an admissible domain, $(B,A)$ is not matchable, and $\matchask(B,A,\alpha)=(\hat B, \hat A, M)$, then $(\hat B,\hat A)$ is not matchable.

\noindent\underline{Induction step}: We prove the statement when $|B|=k$.

Let $\beta$ be the most competitive bid in $B$. 
If $\beta$ is not tradable with $\alpha$, then as per $\matchask$, 
$\hat B = B$ and $\hat A = A\cup \{\alpha\}$. Now since $\alpha$ is not tradable with the most competitive bid in $B$, it is not tradable with any bid in $B$. Coupled with the fact that $(B,A)$ is not matchable, we get $(\hat B, \hat A)$ is not matchable.

From now on, we may assume that $\beta$ and $\alpha$ are tradable. We have three cases based on the relationship between $\q(\beta)$ and $\q(\alpha)$.
\begin{itemize}
\item{ If $\q(\beta)=\q(\alpha)$,} then as per $\matchask$, $(\hat B,\hat A)=(B\setminus \{\beta\},A)$, and we are trivially done as $(B,A)$ is not matchable.
\item{If $\q(\beta)>\q(\alpha)$,} then as per $\matchask$, $(\hat B, \hat A) = ((B\setminus\{\beta\})\cup\{\beta'\},A)$ where $\beta$ and $\beta'$ only differ in their quantities. Now since $(B,A)$ is not matchable, $(\hat B,\hat A)$ is not matchable, and we are done. 
\item{If $\q(\beta) < \q(\alpha)$,} then $\matchask$ on $(B,A,\alpha)$ first recursively calls $\matchask$ on $(B\setminus \{\beta\},A,\alpha')$ where $\alpha'$ is obtained from $\alpha$ by reducing its quantity to $\q(\alpha)-\q(\beta)$. Let $(B',A',M')=\matchask (B\setminus\{\beta\},A,\alpha)$. Then, $\matchask$ returns $(\hat B, \hat A, M)= (B',A',M)$ where $M$ is obtained from $M'$ in some way. Now since, the set of bids in the recursive call has cardinality $|B\setminus \{\beta\}|=k-1$, $(B \setminus \{\beta\}, A \cup \{\alpha'\})$ is admissible (as $(B,A\cup\{\alpha\})$ admissible), and $(B\setminus\{\beta\},A)$ is not matchable (as $(B,A)$ not matchable), we can apply the induction hypothesis to conclude $(B',A')$ is not matchable. Now as $(\hat B, \hat A)=(B',A')$, we are done.
\end{itemize}
\end{proof}

\subsection{Match\_Ask satisfies price-time priority}

The subroutine $\matchask$ satisfies \texttt{price-time priority} follows immediately from the following lemma.

\begin{lemma}\label{lem:Match_AskSatisfiesPriceTimePriority}
Let $(B,A\cup\{\alpha\})$ be an admissible order-domain such that $(B,A)$ is not matchable. If $\matchask(B,A,\alpha)=(\hat B, \hat A, M)$, then
\begin{align*}
    \text{(i) }\forall a,a'\in A\cup\{\alpha\}, \ & a \succ a' \text{ and } \id(a') \in \idsa(M)
    \implies  \Q(M,id(a)) = \q(a)  \\
    \text{and (ii) }\forall b,b'\in B, \ & b \succ b' \text{ and } \id(b') \in \idsb(M)      \implies \Q(M,id(b)) = \q(b).
\end{align*}
\end{lemma}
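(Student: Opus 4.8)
The plan is to dispatch part (i) almost immediately and to prove part (ii) by induction on $|B|$, reusing the case analysis of $\matchask$ already exhibited in the proof of Lemma~\ref{lem:Match_AskSatisfiesPositiveBidAskSpread}. For part (i), the key observation is that $\matchask$ only ever creates transactions whose ask is $\alpha$ (or, in the recursive call, a copy of $\alpha$ with reduced quantity but the same id); hence $\idsa(M)\subseteq\{\id(\alpha)\}$. Consequently the hypothesis $\id(a')\in\idsa(M)$ forces $a'=\alpha$, and since $a\succ a'$ with $a\in A\cup\{\alpha\}$ forces $a\in A$, it suffices to show that no resident ask in $A$ is more competitive than $\alpha$ whenever $\alpha$ actually trades. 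But if $\alpha$ trades with some bid $\beta$ then $\pr(\beta)\geq\pr(\alpha)$, while $a\succ\alpha$ gives $\pr(a)\leq\pr(\alpha)$; together these yield $\pr(\beta)\geq\pr(a)$, so $\beta\in B$ and $a\in A$ are tradable, contradicting that $(B,A)$ is not matchable. Thus the premise of (i) is never satisfiable and (i) holds vacuously.

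For part (ii) I would induct on $|B|$, with the base case ($M=\emptyset$) vacuous. In the induction step let $\beta$ be the most competitive bid in $B$ and split into the same cases as in the previous lemma. If $\beta$ is not tradable with $\alpha$, then no bid trades and $M=\emptyset$, so the premise is vacuous. If $\beta$ and $\alpha$ are tradable with $\q(\beta)\geq\q(\alpha)$, then $M$ is the single transaction between $\beta$ and $\alpha$, so $\idsb(M)=\{\id(\beta)\}$; the premise forces $b'=\beta$, but $\beta$ is the most competitive bid, so no $b\succ\beta$ exists and the statement is again vacuous.

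The substantive case is $\q(\beta)<\q(\alpha)$, where $\matchask$ recurses on $(B\setminus\{\beta\},A,\alpha')$ with $\id(\alpha')=\id(\alpha)$ and $\q(\alpha')=\q(\alpha)-\q(\beta)$, and returns $M=M'\cup\{(\id(\beta),\id(\alpha),\q(\beta))\}$, where $(B',A',M')$ is the output of the recursive call. Here $\beta$ is fully traded, i.e.\ $\Q(M,\id(\beta))=\q(\beta)$. Given $b\succ b'$ with $\id(b')\in\idsb(M)=\{\id(\beta)\}\cup\idsb(M')$, the case $b'=\beta$ is vacuous as before, so $\id(b')\in\idsb(M')$ and hence $b'\in B\setminus\{\beta\}$. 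If $b=\beta$ we are done since $\beta$ is fully traded. Otherwise $b\in B\setminus\{\beta\}$, and I apply the induction hypothesis to the recursive call --- whose preconditions hold because $|B\setminus\{\beta\}|<|B|$, $(B\setminus\{\beta\},A\cup\{\alpha'\})$ is admissible, and $(B\setminus\{\beta\},A)$ is not matchable --- to conclude $\Q(M',\id(b))=\q(b)$. Since the only transaction in $M$ not in $M'$ involves $\beta\neq b$, we get $\Q(M,\id(b))=\Q(M',\id(b))=\q(b)$, as required.

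The main obstacle I anticipate is bookkeeping rather than conceptual: verifying that the recursive call satisfies the hypotheses of the induction (admissibility and non-matchability are preserved under removing $\beta$ and shrinking $\alpha$ to $\alpha'$), carefully tracking that $\alpha'$ retains the id of $\alpha$ so that the identification $\idsa(M)\subseteq\{\id(\alpha)\}$ and the ask-side reasoning in (i) survive the recursion, and confirming that adjoining the single $\beta$--$\alpha$ transaction leaves $\Q(M,\id(b))$ unchanged for every bid $b\neq\beta$. These steps are routine given the explicit description of $\matchask$.
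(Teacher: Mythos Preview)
Your proposal is correct and follows essentially the same approach as the paper: part~(i) is dispatched by observing that all transactions in $M$ have ask id $\id(\alpha)$ and that no $a\in A$ can be more competitive than $\alpha$ (lest it be tradable with a bid in $B$), and part~(ii) is proved by induction on $|B|$ with the same case split on the relation between $\q(\beta)$ and $\q(\alpha)$. The only cosmetic difference is that the paper packages the case analysis via the single observation that $\matchask$ either fully trades $\beta$ or trades only $\beta$ partially, whereas you handle the cases separately; the content is the same.
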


\begin{proof}
We first show (i). If $M=\emptyset$, then \texttt{price-time priority} trivially holds.
Otherwise, since $(B,A)$ is not matchable, only $\alpha$ in $A\cup\{\alpha\}$ is tradable with some bid in $B$. Therefore, $\alpha$ is the most-competitive ask in $A\cup \{ \alpha\}$ and all transactions $t\in M$  are such that $\ida(t)=\id(\alpha)$. Thus no $a'$ exists such that $a\succ a'$ and $\id(a')\in\idsa(M)$. Thus, \texttt{price-time priority} trivially holds.

Next, we prove (ii) by induction on $|B|$. 

\noindent\underline{Base case} is trivial as $B=\emptyset \implies M=\emptyset$. 

\noindent\underline{Induction hypothesis}: We assume that (ii) holds for all $B$ such that $|B|<k$.

\noindent\underline{Induction step}: We now show that (ii) holds for all $B$ such that $|B|=k$.

Fix $b, b'\in B$ such that $b\succ b'$ and $b'$ gets traded in $M$. We need to show that $b$ is fully traded in $M$. 

Let $\beta$ be the most competitive bid in $B$. If $\beta$ is not tradable with $\alpha$, as per $\matchask$, $M=\emptyset$ and we are trivially done. Henceforth we assume $\beta$ is tradable with $\alpha$.

Observe, $\matchask$ is such that it completely trades $\beta$ in its output $M$ (along with possibly other bids from $B$) or only trades a part of $\beta$ but no other bid from $B$. Thus, $\beta$ cannot be $b'$. We thus consider two cases: $b=\beta$ or $b\neq \beta$. 
\begin{itemize}
\item{Case: $b=\beta$.} Here again, we are done from the above observation that either $\beta$ gets completely traded or $\beta$ gets partially traded but no other bid gets traded at all.

\item{Case: $b\neq \beta$.} When $\q(\beta)=\q(\alpha)$ or $\q(\beta)>\q(\alpha)$, as per $\matchask$, $\beta$ is the only bid that gets partially or fully traded in $M$, contradicting the existence of $b$ and $b'$. 

 \hspace{20pt}In the final case, when $\q(\beta) < \q(\alpha)$, $\matchask$ on $(B,A,\alpha)$ recursively calls $\matchask$ on $(B\setminus\{\beta\},A,\alpha')$ where $\alpha'$ is obtained from $\alpha$ by reducing its quantity by $\q(\beta)$.
Let the recursive call return $(B',A',M')=\matchask(B\setminus\{\beta\},A,\alpha')$, then $\matchask$ returns $(\hat B, \hat A, M)$ $=$ $(B',A',M\cup \{m\})$, where $m$ is a transaction between $\alpha$ and $\beta$. Thus, all transactions $t$ in $M$  where $\idb(t)\neq \id(\beta)$ come from $M'$. Now since $\beta\succ b \succ b'$ are distinct, if $b'$ gets traded in $M$, it must get traded in $M'$ and applying the induction hypothesis (as it is easy to verify that |$B\setminus\{\beta\}|=k-1$, $(B\setminus \{\beta\}, A\cup\{\alpha'\})$ admissible and $(B\setminus\{\beta\},A)$ not matchable), we have $b$ gets fully traded in $M'$ and hence in $M$ (as $M= M'\cup\{m\}$) and we are done.
\end{itemize}
\end{proof}

\subsection{Match\_Ask satisfies conservation}

$\matchask$ satisfies \texttt{conservation} follows immediately from the following lemma.

\begin{lemma}\label{lem:Match_AskSatisfiesConservation}
Let $(B,A\cup\{\alpha\})$ be an admissible order-domain. If $\matchask(B,A,\alpha)=(\hat B, \hat A, M)$, then
\begin{align*} 
 &\text{a. } M \text{ is matching over the order-domain } (B,A\cup\{\alpha\}) \\
&\text{b. }\hat{B} = B - \Bids(M,B) \\
&\text{c. } \hat{A} = (A\cup\{\alpha\}) - \Asks(M,A\cup\{\alpha\}).
\end{align*}
\end{lemma}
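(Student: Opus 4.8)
The plan is to prove all three parts (a)--(c) simultaneously by induction on $|B|$, following exactly the case analysis used in the proofs of Lemma~\ref{lem:Match_AskSatisfiesPositiveBidAskSpread} and Lemma~\ref{lem:Match_AskSatisfiesPriceTimePriority}, since $\matchask$ branches on the same conditions. Fixing $A$ and $\alpha$, let $\beta$ be the most competitive bid in $B$; the output $(\hat B, \hat A, M)$ of $\matchask$ is determined by whether $\beta$ and $\alpha$ are tradable and, if so, by the comparison between $\q(\beta)$ and $\q(\alpha)$. Throughout, the two conservation equalities (b) and (c) are really statements about multiset subtraction: $\Bids(M,B)$ (resp.\ $\Asks(M, A\cup\{\alpha\})$) records each traded order with multiplicity equal to its total traded quantity, so subtracting it from $B$ (resp.\ $A\cup\{\alpha\}$) should leave precisely the residual quantities returned by the algorithm.

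For the base case $B=\emptyset$ and the non-tradable case we have $M=\emptyset$, $\hat B = B$, and $\hat A = A\cup\{\alpha\}$; since $\Bids(\emptyset)=\Asks(\emptyset)=\emptyset$, all three parts hold immediately. When $\beta$ and $\alpha$ are tradable and $\q(\beta)=\q(\alpha)$ or $\q(\beta)>\q(\alpha)$, the matching $M$ is the single transaction $m=(\id(\beta),\id(\alpha),\min(\q(\beta),\q(\alpha)))$, which is valid and respects every quantity bound, giving part (a) directly. For (b) and (c) I would just compute the two multiset differences: subtracting $m$'s contribution removes $\alpha$ entirely (its full quantity trades) and reduces $\beta$ to $\beta'$ with quantity $\q(\beta)-\q(\alpha)$ (or removes $\beta$ entirely when the quantities are equal), which is exactly the $(\hat B, \hat A)$ that $\matchask$ returns.

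The only real work is the recursive case $\q(\beta)<\q(\alpha)$, where $\matchask$ calls itself on $(B\setminus\{\beta\}, A, \alpha')$ with $\q(\alpha')=\q(\alpha)-\q(\beta)$, obtains $(B',A',M')$, and returns $(\hat B,\hat A,M)=(B',A',M'\cup\{m\})$ with $m=(\id(\beta),\id(\alpha),\q(\beta))$. Since $|B\setminus\{\beta\}|=|B|-1$ and $\alpha'$ shares the $\id$, $\ts$, and $\pr$ of $\alpha$, the order-domain $(B\setminus\{\beta\}, A\cup\{\alpha'\})$ remains admissible, so I can apply the induction hypothesis to the recursive call and obtain (a)--(c) for $M'$. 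The crux --- and the step I expect to be the main obstacle --- is propagating the bookkeeping for the ask through the quantity reduction: because $\alpha'$ carries the same identity as $\alpha$ but has quantity lowered by $\q(\beta)$, the total traded quantity of $\id(\alpha)$ in $M$ equals $\Q_{\text{ask}}(M',\id(\alpha))+\q(\beta)\le (\q(\alpha)-\q(\beta))+\q(\beta)=\q(\alpha)$, which is what is needed for (a), and the residual multiplicity of $\alpha$ after subtracting $\Asks(M)$ is $\q(\alpha)-\Q_{\text{ask}}(M',\id(\alpha))-\q(\beta)$, which coincides with the residual of $\alpha'$ after subtracting $\Asks(M')$. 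Using the identity $\q(\alpha)=\q(\beta)+\q(\alpha')$ together with the fact that $\beta\notin B\setminus\{\beta\}$ (so $\beta$ contributes only the new transaction $m$ and is fully traded there), I would conclude $(A\cup\{\alpha\})-\Asks(M) = (A\cup\{\alpha'\})-\Asks(M') = A' = \hat A$ and, analogously but more simply, $B-\Bids(M) = (B\setminus\{\beta\})-\Bids(M') = B' = \hat B$, completing the induction.
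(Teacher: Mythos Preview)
Your proposal is correct and follows essentially the same approach as the paper: induction on $|B|$ with the same case split (non-tradable, $\q(\beta)\ge\q(\alpha)$, and the recursive case $\q(\beta)<\q(\alpha)$), the same admissibility check for the recursive input, and the same key bookkeeping identity $\q(\alpha)=\q(\beta)+\q(\alpha')$ to propagate the ask-side conservation. The only difference is organizational: the paper runs three separate inductions on $|B|$ (one per part) and for part~c first reduces to the scalar identity $\q(\hat A,\id(\alpha))=\q(\alpha)-\Q(M,\id(\alpha))$ using the observation that every transaction produced by $\matchask$ has $\alpha$ as its ask, whereas you fold all three parts into a single simultaneous induction and phrase~(c) directly as a multiset equality.
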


\begin{proof}

We briefly outline the proof stressing only the important aspects.

\noindent\textbf{Part a.} To show $M$ is a matching over $(B,A\cup\{\alpha\})$, we need to show that 
(i) each transaction in $M$ is between a tradable bid-ask pair where the bid is in $B$ and the ask is in $A\cup\{\alpha\}$, and 
(ii) for each order $\omega$ in $B\cup A\cup\{\alpha\}$, its total transaction quantity in $M$ $\Q(M,\id(\omega))$ is at most its quantity $\q(\omega)$. 

We will show this by induction on $|B|$. The base case is trivial: $|B|=0\implies M=\emptyset$, which is trivially a matching. We assume that (i) and (ii) hold for all $B$ such that $|B|<k$. We now show that (i) and (ii) hold assuming $|B|=k$.

Let $\beta$ be the most competitive bid in $B$. When $\beta$ is not tradable with $\alpha$, as per $\matchask$, $M=\emptyset$ and we are trivially done. Henceforth we assume $\beta$ is tradable with $\alpha$. Now, when $\q(\beta)\geq \q(\alpha)$, $\matchask$ outputs only a single transaction $m$ between $\beta$ and $\alpha$ with transaction quantity $\q(m)=\q(\alpha) \leq \q(\beta)$ and clearly $M=\{m\}$ satisfies (i) and (ii) above.

When $\beta$ is tradable with $\alpha$ and $\q(\beta) < \q(\alpha)$, $\matchask$ creates a transaction $m$ between $\beta$ and $\alpha$ with transaction quantity $\q(\beta)$ and obtains $\alpha'$ from $\alpha$ by reducing its quantity by $\q(\beta)$. It then recursively calls $\matchask(B\setminus \{\beta\},A,\alpha')$ to obtain the matching $M'$ and then outputs the matching $M=M'\cup \{m\}$. 

Note that $(B,A\cup\{\alpha\})$ admissible implies $(B\setminus\{\beta\},A\cup \{\alpha'\})$ is admissible. Furthermore, since $|B\setminus\{\beta\}|=k-1$, we can apply the induction hypothesis to obtain that $M'$ is a matching between $B\setminus \{\beta\}$ and $A\cup\{\alpha'\}$. In particular, this means that $\beta$ does not participate in $M'$ and the total transaction quantity of $\alpha$ in $M'$ is at most $\q(\alpha')=\q(\alpha)-\q(\beta)$. So the total transaction quantity of $\beta$ in $M=M'\cup\{m\}$ is the transaction quantity of $m$, which is $\q(\beta)$, and the total transaction quantity of $\alpha$ in $M=M'\cup\{m\}$ is at most $\q(\alpha')+\q(m)=(\q(\alpha)-\q(\beta)) + \q(\beta) = \q(\alpha)$. With these facts, one can easily verify that (i) and (ii) hold.

\noindent\textbf{Part b.} Here we need to show that $\hat B = B - \Bids(M,B)$. Once again we argue by induction on $|B|$. Let $\beta$ be the most competitive bid in $B$.

In all but the last if statement of $\matchask$ (where $\beta$ and $\alpha$ are tradable and $\q(\beta)<\q(\alpha)$), we have $B$, $\hat B$, and $M$ explicitly and we can directly verify that the property holds.

In the last if condition, $\matchask$ first creates a transaction $m$ between $\beta$ and $\alpha$ with transaction quantity $\q(\beta)$ and obtains $\alpha'$ from $\alpha$ by reducing its quantity to $\q(\alpha)-\q(\beta)$. It then computes $(B',A',M')=\matchask(B\setminus\{\beta\},A,\alpha')$ recursively. Finally it returns $(\hat B, \hat A, M)=(B',A', M' \cup \{m\})$.

For the recursive call, we can apply the induction hypothesis. Thus,
\begin{align*}
&B' = (B \setminus \{\beta\}) - \Bids(M',B\setminus \{\beta\}) \quad\quad\text{(from I.H.)}\\
\implies\quad &\hat B = (B \setminus \{\beta\}) - \Bids(M',B\setminus \{\beta\}) \quad\quad\text{(since $\hat B = B'$)}\\
& \hspace{7.5pt}= B - \Bids(M,B),
\end{align*}
where the last equality follows from the observation that $\beta$ is missing from both $B\setminus\{\beta\}$ and $\Bids(M',B\setminus\{\beta\})$, and $\Bids(M,B)$ contains $\beta$ (as $M=M'\cup\{m\}$ where $m$ is a transaction involving $\beta$ with transaction quantity $\q(\beta)$).

\noindent\textbf{Part c.} Observe that all transactions produced by $\matchask$ have $\alpha$ as its participating ask. 
Thus, one can check that we will be done if we can show that $$\q(\hat A, \id(\alpha)) = \q(\alpha) - \Q(M,\id(\alpha)).$$
To show the above, we again induct on $|B|$.
In all but the last if statement of $\matchask$, we have $\alpha$, $\hat A$, and $M$ explicitly, and we can directly verify that the above equation holds.

In the last if statement, we have that $\q(\beta) <\q(\alpha)$ and $\beta$ and $\alpha$ are tradable, where $\beta$ is the most competitive bid in $B$. In this case, $\matchask$ creates a transaction $m$ between $\beta$ and $\alpha$ with transaction quantity $\q(\beta)$ and obtains $\alpha'$ from $\alpha$ by reducing its quantity by $\q(\beta)$. Then, it recursively calls $\matchask(B\setminus\{\beta\},A,\alpha')$ to obtain $(B',A',M')$ and finally outputs $(\hat B, \hat A, M)=(B',A',M'\cup\{m\})$. We can apply the induction hypothesis for the recursive call. Thus,
\begin{align*}
\q(A',\id(\alpha'))&=\q(\alpha')-\Q(M',\id(\alpha'))\quad\quad\quad\text{(from I.H.)}\\
\implies \q(\hat A,\id(\alpha))&=\q(\alpha')-\Q(M',\id(\alpha))\\
&\quad\quad\quad\quad\quad\text{(since $\hat A= A'$ and $\id(\alpha')=\id(\alpha)$)}\\
&= \q(\alpha) - [\q(\beta) + \Q(M',\id(\alpha))] \\ &\quad\quad\quad\quad\quad\text{(as $\q(\alpha')=\q(\alpha)-\q(\beta)$)}\\
&= \q(\alpha) - \Q(M,\id(\alpha)),
\end{align*}
where the last equality follows from observing that $M=M'\cup\{m\}$ and $m$ is a transaction involving $\alpha$ with transaction quantity $\q(\beta)$.
\end{proof}


\end{document}